\newcommand{\SB}{\{\,}
\newcommand{\SM}{\;{:}\;}
\newcommand{\SE}{\,\}}
\newcommand{\Card}[1]{|#1|}
\newcommand{\CCC}{\mathcal{C}}
\newcommand{\vars}{\text{\normalfont vars}}
\newcommand{\lits}{\text{\normalfont lits}}
\newcommand{\NP}{\text{\normalfont NP}}
\renewcommand{\P}{\text{\normalfont P}}
\newcommand{\XP}{\text{\normalfont XP}}
\newcommand{\Prob}{\text{\normalfont Prob}}
\newcommand{\dom}{\rightarrow}
\newcommand{\ep}{\epsilon}
\newcommand{\GF}{\text{\normalfont GF}}
\newcommand{\2}{\vspace{3mm}}
\newtheorem{lemma}{Lemma}
\newtheorem{theorem}{Theorem}
\newtheorem{proposition}{Proposition}
\newtheorem{krule}{Reduction Rule}
\let\phi=\varphi
\newcommand{\Case}[1]{\paragraph{#1}}
\begin{document}

\title{A Probabilistic Approach to Problems Parameterized Above or Below Tight Bounds\footnote{An extended abstract of this paper will appear in the Proceedings of IWPEC 2009.}}

\author{Gregory Gutin\footnote{Department of Computer Science,
Royal Holloway University of London, Egham, Surrey TW20 0EX,
England, UK, \texttt{gutin@cs.rhul.ac.uk}}\and Eun Jung Kim\footnote{Department of Computer Science,
Royal Holloway University of London, Egham, Surrey TW20 0EX,
England, UK, \texttt{eunjung@cs.rhul.ac.uk}}\and Stefan
  Szeider\footnote{Department of Computer Science, Durham University,
Durham DH1 3LE, England, UK,
\texttt{stefan@szeider.net}}\and Anders Yeo\footnote{Department of Computer Science,
Royal Holloway University of London, Egham, Surrey TW20 0EX,
England, UK, \texttt{anders@cs.rhul.ac.uk}}}

\date{ }
\maketitle

\begin{abstract}
  We introduce a new approach for establishing fixed-parameter
  tractability of problems parameterized above tight lower bounds or below
  tight upper bounds. To
  illustrate the approach we consider three problems of this type of
  unknown complexity that were introduced by Mahajan, Raman and Sikdar
  (J. Comput. Syst. Sci.  75, 2009). We show that a generalization of
  one of the problems and non-trivial special cases of the other two
  are fixed-parameter tractable.
 \end{abstract}

\pagenumbering{arabic}
\pagestyle{plain}

\section{Introduction}

A parameterized problem $\Pi$ can be considered as a set of pairs
$(I,k)$ where $I$ is the \emph{main part} and $k$ (usually an integer)
is the \emph{parameter}. $\Pi$ is called \emph{fixed-parameter
  tractable} (FPT) if membership of $(I,k)$ in $\Pi$ can be decided in
time $O(f(k)|I|^c)$, where $|I|$ denotes the size of $I$, $f(k)$ is a
computable function, and $c$ is a constant independent of $k$ and $I$
(for further background and terminology on parameterized complexity we
refer the reader to the
monographs~\cite{DowneyFellows99,FlumGrohe06,Niedermeier06}). If the
nonparameterized version of $\Pi$ (where $k$ is just a part of the input)
is $\NP$-hard, then the function $f(k)$ must be superpolynomial
provided $\P\neq \NP$. Often $f(k)$ is ``moderately exponential,''
which makes the problem practically feasible for small values of~$k$.
Thus, it is important to parameterize a problem in such a way that the
instances with small values of $k$ are of real interest.

Consider the following well-known problem: given a digraph $D=(V,A)$,
find an acyclic subdigraph of $D$ with the maximum number of arcs. We
can parameterize this problem ``naturally'' by asking whether $D$
contains an acyclic subdigraph with at least $k$ arcs. It is easy to
prove that this parameterized problem is fixed-parameter tractable by
observing that $D$ always has an acyclic subdigraph with at least
$|A|/2$ arcs. (Indeed, consider a bijection $\alpha:\ V \dom
\{1,\ldots ,|V|\}$ and the following subdigraphs of $D$: $(V,\SB xy\in
A \SM \alpha(x)< \alpha(y)\SE)$ and $(V,\SB xy\in A \SM \alpha(x)>
\alpha(y)\SE)$.  Both subdigraphs are acyclic and at least one of them
has at least $|A|/2$ arcs.) However, $k\le |A|/2$ for every small
value of $k$ and almost every practical value of $|A|$ and, thus, our
``natural'' parameterization is of almost no practical or theoretical
interest.


Instead, one should consider the following parameterized problem:
decide whether $D=(V,A)$ contains an acyclic subdigraph with at least
$|A|/2+k$ arcs. We choose $|A|/2+k$ because $|A|/2$ is a \emph{tight
  lower bound} on the size of a largest acyclic subdigraph. Indeed,
the size of a largest acyclic subdigraph of a symmetric digraph
$D=(V,A)$ is precisely $|A|/2$. (A digraph $D=(V,A)$ is {\em
  symmetric} if $xy\in A$ implies $yx\in A$.)

In a recent paper~\cite{MahajanRamanSikdar09} Mahajan, Raman and
Sikdar provided several examples of problems of this type and argued
that a natural parameterization is one above a tight lower bound for
maximization problems, and below a tight upper bound for minimization
problems. Furthermore, they observed that only a few non-trivial
results are known for problems parameterized above a tight lower
bound~\cite{GutinRafieySzeiderYeo07,GutinSzeiderYeo08,HeggernesPaulTelleVillanger07,MahajanRaman99},
and they listed several problems parameterized above a tight lower
bound whose complexity is unknown. The difficulty in showing whether
such a problem is fixed-parameter tractable can be illustrated by the
fact that often we even do not know whether the problem is in $\XP$,
i.e., can be solved in time $O(|I|^{g(k)})$ for a computable
function~$g(k)$. For example, it is non-trivial to see that the above-mentioned
digraph problem is in $\XP$ when parameterized above the
$|A|/2$ bound.

In this paper we introduce the \emph{Strictly Above/Below Expectation Method (SABEM)}, a novel
approach for establishing the fixed-parameter tractability of maximization
problems parameterized above tight lower bounds and minimization problems parameterized below tight upper bounds.
The new method is based on
probabilistic arguments and utilizes certain probabilistic inequalities. We will
state the equalities in the next section, and in the subsequent sections we will
apply SABEM to three open problems posed in \cite{MahajanRamanSikdar09}.

Now we give a very brief description of the new method with respect to a given
problem $\mathrm{\Pi}$ parameterized above a
tight lower bound or below a tight upper bound.
We first apply some reductions rules to reduce $\mathrm{\Pi}$
to its special case $\mathrm{\Pi}'.$ Then we introduce a random
variable $X$ such that the answer to $\mathrm{\Pi}$  is \textsc{yes} if and only if $X$ takes, with positive
probability, a value greater or equal to the parameter~$k$. Now using some probabilistic inequalities on $X$,
we derive upper bounds on the size of \textsc{no}-instances of $\mathrm{\Pi}'$ in
terms of a function of the parameter~$k$. If the size of a given instance
exceeds this bound, then we know the answer is \textsc{yes}; otherwise, we produce a \emph{problem
kernel}~\cite{DowneyFellows99}. In many cases, we obtain problem kernels of polynomial size.

In Section \ref{secLOP}, we consider the \textsc{Linear Ordering} problem, a
generalization of the problem discussed above: Given a digraph $D=(V,A)$ in
which each arc $ij$ has a positive integral weight $w_{ij}$, find an acyclic
subdigraph of $D$ of maximum weight.  Observe that $W/2$, where $W$ is the sum of
all arc weights, is a tight lower bound for \textsc{Linear Ordering}. We prove
that the problem parameterized above $W/2$ is fixed-parameter tractable and
admits a quadratic kernel. Note that this parameterized problem generalizes the
parameterized maximum acyclic subdigraph problem considered
in~\cite{MahajanRamanSikdar09}; thus, our result answers the corresponding open
question of~\cite{MahajanRamanSikdar09}.

In Section \ref{secLin}, we consider the problem \textsc{Max Lin-2}: Given a
system of $m$ linear equations $e_1,\dots,e_m$ in $n$ variables over $\GF(2)$,
and for each equation $e_j$ a positive integral weight $w_j$; find an assignment
of values to the $n$ variables that maximizes the total weight of the satisfied
equations. We will see that $W/2$, where $W=w_1+\cdots +w_m$, is a tight
lower bound for \hbox{\textsc{Max Lin-2}}. The complexity of the problem
parameterized above $W/2$ is open~\cite{MahajanRamanSikdar09}. We prove that the
following three special cases of the parameterized problem are fixed-parameter tractable:
(1) there is a set $U$ of variables such that each equation has an odd number of variables from $U$, (2) there is a constant $r$ such that
each equation involves at most $r$ variables, (3) there is a constant $\rho$
such that any variable appears in at most $\rho$ equations.  For all three cases we obtain kernels with $O(k^2)$ variables
and equations. We also show that if we allow the weights $w_j$ to be positive
reals, the problem is $\NP$-hard already if $k=1$ and each equation involves two
variables.

In Section \ref{secSAT}, we consider the problem \textsc{Max Exact $r$-SAT}:
given an exact $r$-CNF formula $\CCC$ with $m$ clauses (i.e., a CNF formula
where each clause contains exactly $r$ distinct literals), find a truth
assignment that satisfies the maximum number of clauses. Here a tight lower
bound is $(1-2^{-r})m$; the complexity of the problem parameterized above
$(1-2^{-r})m$ is an open question~\cite{MahajanRamanSikdar09}. This seems to be
the most difficult problem of the three considered. We obtain a quadratic kernel
for a non-trivial special case of this problem.

In Section \ref{secd}, we briefly mention minimization problems
parameterized below tight upper bounds,
provide further discussions of problems considered in this paper and
point out to a very recent result obtained using our new method.

\section{Probabilistic Inequalities}\label{sect:ineq}

In our approach we introduce a random variable $X$ such that the answer to the
problem parameterized above a tight lower bound or below a tight upper bound is {\sc yes} if and only if $X$
takes with positive probability a value greater or equal to the
parameter~$k$.

In this paper all random variables are real. A random variable is {\em discrete} if its distribution function
has a finite or countable number of positive increases.
A random variable $X$ is a {\em symmetric}  if $-X$ has the same distribution function as $X$.
If $X$ is discrete, then $X$ is symmetric if and only if $\Prob(X=a)=\Prob(X=-a)$
for each real $a.$  Let $X$ be a symmetric variable for which the first moment $\mathbb{E}(X)$ exists.
Then $\mathbb{E}(X)=\mathbb{E}(-X)=-\mathbb{E}(X)$ and, thus, $\mathbb{E}(X)=0.$
The following is easy to prove \cite{vovk}.
\begin{lemma}\label{eqsym}
If $X$ is a symmetric random variable and $\mathbb{E}(X^2)<\infty$, then
$$\Prob(\ X \ge \sqrt{\mathbb{E}(X^2)}\ )>0.$$\end{lemma}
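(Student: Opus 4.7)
The plan is to argue by contradiction: set $\sigma := \sqrt{\mathbb{E}(X^2)}$ and suppose $\Prob(X \ge \sigma) = 0$. The symmetry of $X$ then immediately yields $\Prob(X \le -\sigma) = \Prob(-X \ge \sigma) = \Prob(X \ge \sigma) = 0$, so the event $\{|X| < \sigma\}$ has probability $1$. This should squeeze $\mathbb{E}(X^2)$ strictly below $\sigma^2$, contradicting the definition of $\sigma$.

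Concretely, I would handle the trivial case $\sigma = 0$ separately (there $X = 0$ almost surely, so $\Prob(X \ge 0) = 1 > 0$). For $\sigma > 0$, set $Y := \sigma^2 - X^2$. The previous paragraph shows that $Y > 0$ almost surely, while the finiteness of $\mathbb{E}(X^2)$ justifies computing $\mathbb{E}(Y) = \sigma^2 - \mathbb{E}(X^2) = 0$. The standard fact that a nonnegative random variable with zero expectation must vanish almost surely then forces $\Prob(Y = 0) = 1$, which contradicts $\Prob(Y > 0) = 1$.

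The proof is essentially routine measure-theoretic bookkeeping; the only point that requires a moment of care is the passage from the pointwise strict inequality $X^2 < \sigma^2$ (a.s.) to a contradiction with $\mathbb{E}(X^2) = \sigma^2$. A naive argument might try to assert that $X^2 < \sigma^2$ a.s.\ directly implies $\mathbb{E}(X^2) < \sigma^2$, which is false in general (the mass of $X^2$ can accumulate near $\sigma^2$). The correct move, as above, is to apply the lemma that $Y \ge 0$ a.s.\ with $\mathbb{E}(Y)=0$ implies $Y=0$ a.s., and then observe the incompatibility with $Y > 0$ a.s. This is the single substantive step; everything else follows from the definition of symmetry and elementary manipulations.
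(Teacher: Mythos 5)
Your proof is correct and complete. Note that the paper gives no proof of Lemma~\ref{eqsym} at all --- it is stated as ``easy to prove'' with a reference to a private communication --- so there is no in-paper argument to compare with; your route (assume $\Prob(X\ge\sigma)=0$, use symmetry to conclude $|X|<\sigma$ almost surely, then apply the fact that a nonnegative random variable with zero expectation vanishes almost surely to $Y=\sigma^2-X^2$) is the natural one, and you correctly isolate both the degenerate case $\sigma=0$ and the one delicate step, where a naive ``strict inequality a.s.\ implies strict inequality of expectations'' would be invalid.
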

See Sections \ref{secLOP} and \ref{secLin} for applications of Lemma \ref{eqsym}. Unfortunately,
often $X$ is not symmetric, but Lemma~\ref{lem32}
provides an inequality that can be used in many such cases. This lemma was
proved by Alon et al.~\cite{AlonGutinKrivelevich04}; a weaker
version was obtained by H{\aa}stad and Venkatesh~\cite{HastadVenkatesh02}.
\begin{lemma}\label{lem32}
  Let $X$ be a random variable and suppose that its first, second and forth
  moments satisfy $\mathbb{E}(X)=0$, $\mathbb{E}(X^2)=\sigma^2>0$ and
  $\mathbb{E}(X^4) \leq b \sigma^4$, respectively.  Then $\Prob(\ X >
  \frac{\sigma}{4 \sqrt b}\ ) \geq \frac{1}{4^{4/3} b}$.
\end{lemma}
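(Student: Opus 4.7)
The plan is to reduce the lemma to a lower bound on $M := \mathbb{E}(X^+)$, where $X^+ := \max\{X,0\}$ and $X^- := \max\{-X,0\}$. Rephrasing the one-sided tail as a statement about $\mathbb{E}(X^+)$ is natural because $\mathbb{E}(X) = 0$ forces $\mathbb{E}(X^+) = \mathbb{E}(X^-) = M$, so a lower bound on $M$ guarantees that $\{X > 0\}$ carries useful positive mass, while an upper bound on $M$ in terms of $\Prob(X > t)$ can be extracted by H\"older's inequality.

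\emph{Step 1: Lower bound on $M$.} Split both moment constraints across the sign of $X$: $\mathbb{E}((X^+)^2) + \mathbb{E}((X^-)^2) = \sigma^2$ and $\mathbb{E}((X^+)^4) + \mathbb{E}((X^-)^4) = \mathbb{E}(X^4) \leq b\sigma^4$. By log-convexity of moments (i.e.\ H\"older at exponents $1$ and $4$), any nonnegative random variable $Y$ satisfies
$$\mathbb{E}(Y^2) \leq \mathbb{E}(Y)^{2/3}\,\mathbb{E}(Y^4)^{1/3}.$$
Applying this with $Y = X^+$ and $Y = X^-$, summing, and invoking concavity of $x \mapsto x^{1/3}$ in the form $a^{1/3} + c^{1/3} \leq 2^{2/3}(a+c)^{1/3}$, I obtain
$$\sigma^2 \leq 2^{2/3} M^{2/3} b^{1/3} \sigma^{4/3},$$
which rearranges to $M \geq \sigma/(2\sqrt{b})$.

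\emph{Step 2: Upper bound on $M$ in terms of $p = \Prob(X > t)$.} Set $t = \sigma/(4\sqrt{b})$ and write $M = \mathbb{E}(X\mathbf{1}_{X > t}) + \mathbb{E}(X^+\mathbf{1}_{0 \leq X \leq t}) \leq \mathbb{E}(X\mathbf{1}_{X > t}) + t$. H\"older's inequality with exponents $4$ and $4/3$ gives
$$\mathbb{E}(X\mathbf{1}_{X > t}) \leq \mathbb{E}(X^4)^{1/4}\,p^{3/4} \leq \sigma b^{1/4} p^{3/4}.$$
Comparing the two bounds on $M$ yields $\sigma/(4\sqrt{b}) \leq \sigma b^{1/4} p^{3/4}$, that is $(bp)^{3/4} \geq 1/4$, which is exactly $p \geq 1/(4^{4/3}b)$ as required.

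\emph{Main obstacle.} The only delicate step is the lower bound on $M$. Using the cruder bound $\mathbb{E}((X^\pm)^4) \leq \mathbb{E}(X^4)$ twice (i.e.\ not exploiting the fact that these two quantities must sum to something at most $b\sigma^4$) would cost a combining factor of $2$ instead of $2^{2/3}$, giving only $M \geq \sigma/(2\sqrt{2}\sqrt b)$; after subtracting $t = \sigma/(4\sqrt b)$ the remaining slack would be too small to match the threshold $1/(4^{4/3}b)$. It is the concave-split trick in Step 1 that balances the constants so that the gap between $M$ and $t$ is exactly $\sigma/(4\sqrt b)$, which in turn is what makes the final H\"older inequality yield the claimed probability bound.
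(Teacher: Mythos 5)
Your proof is correct. Note that the paper itself does not prove this lemma --- it only cites Alon, Gutin and Krivelevich \cite{AlonGutinKrivelevich04} --- and your argument is essentially the standard one from that source: lower-bound $M=\mathbb{E}(X^+)$ using $\mathbb{E}(X)=0$ together with the log-convexity (H\"older) inequality, then upper-bound $M$ by $t+\mathbb{E}(X^4)^{1/4}\Prob(X>t)^{3/4}$ and compare. Your Step 1, which splits by sign and uses $a^{1/3}+c^{1/3}\le 2^{2/3}(a+c)^{1/3}$, is an equivalent (slightly more roundabout) way of applying H\"older directly to $|X|$ to get $\mathbb{E}|X|\ge \sigma/\sqrt{b}$ and then using $\mathbb{E}(X^+)=\tfrac12\mathbb{E}|X|$; both give $M\ge \sigma/(2\sqrt b)$ and hence exactly the claimed constant $1/(4^{4/3}b)$.
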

Since it is often rather nontrivial to evaluate $\mathbb{E}(X^4)$ in order to
check whether $\mathbb{E}(X^4) \leq b \sigma^4$ holds, one can sometimes use the following extension
of Khinchin's Inequality by Bourgain~\cite{Bourgain80}.
\begin{lemma}\label{lem41}
  Let $f=f(x_1,\ldots,x_n)$ be a polynomial of degree $r$ in $n$ variables
  $x_1,\ldots,x_n$ with domain $\{-1,1\}$. Define a random variable $X$ by
  choosing a vector $(\ep_1,\ldots,\ep_n)\in \{-1,1\}^n$ uniformly at random and
  setting $X=f(\ep_1,\ldots,\ep_n)$.  Then, for every $p \geq 2$, there is a
  constant $c_p$ such that \[ (\mathbb{E}(|X|^p ))^{1/p} \leq (c_p)^r
  (\mathbb{E}(X^2))^{1/2}. \] In particular, $c_4 \leq 2^{3/2}$.
\end{lemma}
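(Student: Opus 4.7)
The plan is to derive the inequality from the $(2,p)$-hypercontractive (Bonami--Beckner) inequality on the Boolean hypercube. First I would expand $f$ in the Walsh--Fourier basis: writing $\chi_S(\epsilon)=\prod_{i\in S}\epsilon_i$ for $S\subseteq\{1,\ldots,n\}$, and using that $\epsilon_i^2=1$, since $\deg f\le r$ we have $f=\sum_{|S|\le r}\hat f(S)\chi_S$; by orthogonality of the $\chi_S$ under the uniform measure on $\{-1,1\}^n$, one gets the Parseval identity $\mathbb{E}(X^2)=\sum_S\hat f(S)^2$.

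Next, I would introduce the noise operator $T_\rho$, defined by $T_\rho\chi_S=\rho^{|S|}\chi_S$, and invoke hypercontractivity in the form: for every $g:\{-1,1\}^n\to\mathbb{R}$ and every $p\ge 2$, $\|T_\rho g\|_p\le\|g\|_2$ whenever $\rho\le 1/\sqrt{p-1}$. Setting $\rho=1/\sqrt{p-1}$ and $g=\sum_{|S|\le r}\rho^{-|S|}\hat f(S)\chi_S$ (well defined because only terms with $|S|\le r$ contribute), one has $T_\rho g=f$, whence
\[
(\mathbb{E}(|X|^p))^{1/p}=\|f\|_p=\|T_\rho g\|_p\le\|g\|_2=\Bigl(\sum_{S}\rho^{-2|S|}\hat f(S)^2\Bigr)^{1/2}\le\rho^{-r}\|f\|_2=(p-1)^{r/2}(\mathbb{E}(X^2))^{1/2}.
\]
This yields $c_p=\sqrt{p-1}$, and in particular $c_4=\sqrt{3}\le 2^{3/2}$.

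The main obstacle is the Bonami--Beckner inequality itself, which I would prove by induction on $n$. The base case $n=1$ reduces to verifying $\tfrac12(|a+\rho b|^p+|a-\rho b|^p)\le(a^2+b^2)^{p/2}$ for all $a,b\in\mathbb{R}$ and $\rho\le 1/\sqrt{p-1}$; by homogeneity one sets $a=1$, and the remaining one-variable inequality is a standard calculus exercise (comparing Taylor expansions at $b=0$ and then extending to all $b$ by a convexity argument). For the inductive step I would decompose $g(\epsilon)=g_0(\epsilon_{<n})+\epsilon_n g_1(\epsilon_{<n})$, so that $T_\rho g=h_0+\rho\epsilon_n h_1$ with $h_i:=T_\rho g_i$; taking the conditional $L^p$-norm in $\epsilon_n$ via the scalar case and then applying Minkowski's inequality in $L^{p/2}$ (valid because $p\ge 2$) gives
\[
\|T_\rho g\|_p\le\bigl(\|h_0\|_p^2+\|h_1\|_p^2\bigr)^{1/2}\le\bigl(\|g_0\|_2^2+\|g_1\|_2^2\bigr)^{1/2}=\|g\|_2,
\]
where the second inequality uses the inductive hypothesis and the final equality is Parseval. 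Thus the only genuinely nontrivial ingredient is the one-dimensional scalar inequality; everything else is routine tensorization bookkeeping, and the final bound on $c_p$ drops out by dividing out the factor $\rho^{-r}$ as above.
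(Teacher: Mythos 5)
Your argument is correct, but it takes a genuinely different route from the paper, which in fact gives no proof of Lemma~\ref{lem41} at all: there the statement is invoked as a known black box, Bourgain's extension of Khinchin's inequality for Walsh expansions, with the citation \cite{Bourgain80} standing in for the argument. You instead derive it from $(2,p)$-hypercontractivity on the cube, and the derivation is sound: a degree-$r$ polynomial on $\{-1,1\}^n$ has multilinear Fourier expansion supported on $|S|\le r$, the choice $g=\sum_{|S|\le r}\rho^{-|S|}\hat f(S)\chi_S$ with $\rho=1/\sqrt{p-1}$ satisfies $T_\rho g=f$, and $\|g\|_2\le\rho^{-r}\|f\|_2$ yields $\|f\|_p\le(p-1)^{r/2}\|f\|_2$; your tensorization step (scalar inequality conditionally on $\epsilon_n$, then the triangle inequality in $L^{p/2}$, then the inductive hypothesis and Parseval) is the standard induction and is stated correctly. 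What your route buys is an explicit and better constant, $c_p=\sqrt{p-1}$ and in particular $c_4=\sqrt{3}\le 2^{3/2}$, which is all the paper needs (and would even slightly sharpen the constants feeding into Lemma~\ref{lem32} and the kernel bounds of Theorems~\ref{thLin} and~\ref{thSAT}); what the paper's citation buys is brevity and the full strength of Bourgain's result without reproving hypercontractivity. The one place where you are thinner than a complete proof is the two-point base case $\tfrac12\bigl(|a+\rho b|^p+|a-\rho b|^p\bigr)\le(a^2+b^2)^{p/2}$: ``Taylor expansion at $b=0$ plus convexity'' is the right spirit but needs care for non-integer $p$ and for $|\rho b|>|a|$; the cleanest standard route is to prove the dual $(q,2)$ two-point inequality for $1\le q\le 2$, where one may assume $|b|\le|a|$ and compare binomial series term by term, and then transfer to $(2,p)$ using the self-adjointness of $T_\rho$ and $1/\sqrt{p-1}=\sqrt{p'-1}$. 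Since that inequality is classical, this is a presentational remark rather than a genuine gap.
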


\section{Linear Ordering}\label{secLOP}

Let $D=(V,A)$ be a digraph with no loops or parallel arcs in which every arc
$ij$ has a positive weight $w_{ij}$. The problem of finding an acyclic
subdigraph of $D$ of maximum weight, known as \textsc{Linear Ordering}, has
applications in economics~\cite{BangjensenGutin09}. Let $n=|V|$ and consider a
bijection $\alpha: V \dom \{1,\ldots,n\}$. Observe that the subdigraphs $(V,\SB
ij\in A\SM \alpha(i)<\alpha(j)\SE)$ and $(V,\SB ij\in A\SM
\alpha(i)>\alpha(j)\SE)$ are acyclic. Since the two subdigraphs contain all arcs
of $D$, at least one of them has weight at least $W/2$, where $W=\sum_{ij\in
  A}w_{ij}$, the \emph{weight} of~$D$. Thus, $W/2$ is a lower bound on the
maximum weight of an acyclic subdigraph of~$D$.  Consider a digraph $D$ where
for every arc $ij$ of $D$ there is also an arc $ji$ of the same weight. Each
maximum weight subdigraph of $D$ has weight exactly $W/2$. Hence the lower bound
$W/2$ is tight.

\begin{quote}
  \textsc{Linear Ordering Above Tight Lower Bound} (\textsc{LOALB})\nopagebreak

  \emph{Instance:} A digraph $D=(V,A)$, each arc $ij$ has an integral positive
  weight $w_{ij}$, and a positive integer~$k$.\nopagebreak

  \emph{Parameter:} The integer $k$.\nopagebreak

  \emph{Question:} Is there an acyclic subdigraph of $D$ of weight at least
  $W/2+k$, where $W=\sum_{ij\in A}w_{ij}$ ?
\end{quote}
Mahajan, Raman, and Sikdar~\cite{MahajanRamanSikdar09} asked whether
\textsc{LOALB} is fixed-parameter tractable for the special case when all arcs
are of weight 1 (i.e., $D$ is unweighted). In this section we will prove that
\textsc{LOALB} admits a kernel with $O(k^2)$ arcs; consequently the problem is
fixed-parameter tractable. Note that if we allow weights to be positive reals,
then we can show, similarly to the $\NP$-completeness proof given in the
next section, that \textsc{LOALB} is $\NP$-complete already for $k=1$.

Consider the following reduction rule:
\begin{krule}\label{LO1}
Assume $D$ has a directed
\hbox{2-cycle $iji$};
if $w_{ij}=w_{ji}$ delete the cycle,
if $w_{ij}>w_{ji}$ delete the arc $ji$ and replace $w_{ij}$ by $w_{ij}-w_{ji}$,
and if $w_{ji}>w_{ij}$ delete the arc $ij$ and replace $w_{ji}$ by $w_{ji}-w_{ij}$.
\end{krule}
It is easy to check that the answer to LOALB for a digraph $D$ is {\sc yes} if
and only if the answer to LOALB is {\sc yes} for a digraph obtained from $D$
using the reduction rule as long as possible.

Let $D=(V,A)$ be an oriented graph, let $n=|V|$ and $W=\sum_{ij\in
  A}w_{ij}$. Consider a random bijection: $\alpha: V \dom \{1,\ldots,n\}$ and a
random variable $X(\alpha)=\frac{1}{2}\sum_{ij\in A} \ep_{ij}(\alpha)$, where
$\ep_{ij}(\alpha)=w_{ij}$ if $\alpha(i)<\alpha(j)$ and
$\ep_{ij}(\alpha)=-w_{ij}$, otherwise. It is easy to see that $X(\alpha)=\sum\SB
w_{ij}\SM ij\in A, \alpha(i)<\alpha(j)\SE - W/2$. Thus, the answer to LOALB is
{\sc yes} if and only if there is a bijection $\alpha: V \dom \{1,\ldots,n\}$
such that $X(\alpha)\ge k$. Since $\mathbb{E}(\ep_{ij})=0$, we have
$\mathbb{E}(X)=0$.

Let $W^{(2)}=\sum_{ij\in A}w_{ij}^2$. We will prove the following:
\begin{lemma}\label{lemEX2}
$\mathbb{E}(X^2)\ge W^{(2)}/12$.
\end{lemma}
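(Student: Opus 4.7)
The plan is to expand $X^2 = \tfrac14\sum_{e,f\in A} \epsilon_e\epsilon_f$ and evaluate $\mathbb{E}(\epsilon_e\epsilon_f)$ by case analysis on how the two arcs share vertices. Since $D$ is an oriented graph (Reduction Rule~\ref{LO1} removed all 2-cycles), the only cases are: (i) $e=f$; (ii) $e$ and $f$ share no vertex; (iii) $e$ and $f$ share exactly one vertex. In case~(i), $\epsilon_e^2 = w_e^2$. In case~(ii), $\alpha$ restricted to the four distinct endpoints has a distribution symmetric under swapping the endpoints of $e$ (independently of $f$), so $\mathbb{E}(\epsilon_e\epsilon_f)=0$.

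The core calculation is case~(iii), where I would enumerate the $3!=6$ equally likely relative orderings of the three involved vertices. A short check shows that for two arcs out of the same vertex, or two arcs into the same vertex, $\mathbb{E}(\epsilon_e\epsilon_f)=\tfrac13 w_e w_f$, while for one out-arc and one in-arc at the common vertex the expectation is $-\tfrac13 w_e w_f$. Introducing $s_v^{+}=\sum_{vj\in A} w_{vj}$, $s_v^{-}=\sum_{jv\in A} w_{jv}$, and $\sigma_v^{\pm}$ analogously with $w^2$, I can rewrite the sum over unordered shared-vertex pairs at a given vertex $v$ as
\[
\frac{(s_v^{+})^2-\sigma_v^{+}}{2} + \frac{(s_v^{-})^2-\sigma_v^{-}}{2} - s_v^{+} s_v^{-} \;=\; \tfrac12\bigl[(s_v^{+}-s_v^{-})^2-\sigma_v^{+}-\sigma_v^{-}\bigr].
\]

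Summing over $v$ and using the double-counting identity $\sum_v(\sigma_v^{+}+\sigma_v^{-})=2W^{(2)}$ (each arc contributes $w_{ij}^2$ once at its tail and once at its head), I obtain
\[
\mathbb{E}(X^2) \;=\; \frac{W^{(2)}}{4} + \frac{1}{12}\sum_{v\in V}(s_v^{+}-s_v^{-})^2 \;-\; \frac{W^{(2)}}{6} \;=\; \frac{W^{(2)}}{12} + \frac{1}{12}\sum_{v\in V}(s_v^{+}-s_v^{-})^2,
\]
which is at least $W^{(2)}/12$, as desired.

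The main obstacle I anticipate is simply bookkeeping: getting the signs right in case~(iii) (distinguishing out-out, in-in, and out-in configurations at the shared vertex), and making sure the ordered-versus-unordered counting is consistent between the quadratic expansion $\sum_{e,f}$ and the per-vertex sums. The oriented-graph assumption is essential here — without it, arcs $ij$ and $ji$ would contribute a perfectly anti-correlated pair that is not covered by the three cases above, which is precisely why Reduction Rule~\ref{LO1} is applied first.
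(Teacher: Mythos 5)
Your proposal is correct and follows essentially the same route as the paper: expand $4\,\mathbb{E}(X^2)$ over ordered pairs of arcs, note that disjoint pairs contribute $0$, that pairs sharing a tail or a head contribute $+\tfrac13 w_e w_f$ and out--in pairs $-\tfrac13 w_e w_f$, and aggregate per vertex. The only (cosmetic) difference is that you complete the square to get the exact identity $\mathbb{E}(X^2)=\tfrac{1}{12}W^{(2)}+\tfrac{1}{12}\sum_{v}(s_v^{+}-s_v^{-})^2$, whereas the paper reaches the same bound by invoking the inequality $(s_v^{+})^2+(s_v^{-})^2\ge 2s_v^{+}s_v^{-}$, which is the same nonnegativity of $(s_v^{+}-s_v^{-})^2$.
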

\begin{proof} Let $N^+(i)$ and $N^-(i)$ denote the sets of out-neighbors and
  in-neighbors of a vertex $i$ in $D$. By the definition of $X$,
  \begin{equation}\label{EX2eq} 4\cdot \mathbb{E}(X^2) = \sum_{ij\in A}
    \mathbb{E}(\ep_{ij}^2) + \sum_{ij,pq\in A}
    \mathbb{E}(\ep_{ij}\ep_{pq}),\end{equation} where the second sum is taken
  over ordered pairs of distinct arcs. Clearly, $\sum_{ij\in
      A} \mathbb{E}(\ep_{ij}^2)=W^{(2)}.$ To compute $\sum_{ij,pq\in A}
\mathbb{E}(\ep_{ij}\ep_{pq})$ we consider the following cases:

\Case{Case 1:} $\{i,j\}\cap \{p,q\}=\emptyset$. Then $\ep_{ij}$ and $\ep_{pq}$
are independent and
$\mathbb{E}(\ep_{ij}\ep_{pq})=\mathbb{E}(\ep_{ij})\mathbb{E}(\ep_{pq})=0$.

\Case{Case 2a:} $|\{i,j\}\cap \{p,q\}|=1$ and $i=p$. Since the probability that
$i<\min\{j,q\}$ or $i>\max\{j,q\}$ is $2/3$, $\ep_{ij}\ep_{iq}=w_{ij}w_{iq}$
with probability $\frac{2}{3}$ and $\ep_{ij}\ep_{iq}=-w_{ij}w_{iq}$ with probability
$\frac{1}{3}$. Thus, for every $i\in V$ we have $\sum_{ij,iq\in A}\mathbb{E}(\ep_{ij}\ep_{iq}) = \frac{1}{3}\sum\SB w_{ij}w_{iq} \SM
j\neq q\in N^+(i)\SE =\frac{1}{3} (\sum_{j\in N^+(i)}w_{ij}
)^2-\frac{1}{3}\sum_{j\in N^+(i)}w^2_{ij}$.

\Case{Case 2b:} $|\{i,j\}\cap \{p,q\}|=1$ and $j=q$. Similarly to Case 2a, we
obtain $\sum_{ij,pj\in A}\mathbb{E}(\ep_{ij}\ep_{pj})=\frac{1}{3} (\sum_{i\in
  N^-(j)}w_{ij})^2-\frac{1}{3}\sum_{i\in N^-(j)}w^2_{ij}$.

\Case{Case 3a:} $|\{i,j\}\cap \{p,q\}|=1$ and $i=q$. Since
$\ep_{ij}\ep_{pi}=w_{ij}w_{pi}$ with probability $\frac{1}{3}$ and
$\ep_{ij}\ep_{pi}=-w_{ij}w_{pi}$ with probability $\frac{2}{3}$, we obtain
$\sum_{ij,pi\in A}\mathbb{E}(\ep_{ij}\ep_{pi})=-\frac{1}{3}\sum\SB w_{ij}w_{pi} \SM j\in N^+(i),\
p\in N^-(i)\SE=-\frac{1}{3}\sum_{j\in N^+(i)}w_{ij}\sum_{p\in N^-(i)}w_{pi}$.

\Case{Case 3b:} $|\{i,j\}\cap \{p,q\}|=1$ and $j=p$. Similarly to Case 3a, we
obtain $\sum_{ij,jq\in A}\mathbb{E}(\ep_{ij}\ep_{jq})=-\frac{1}{3}\sum_{i\in
  N^-(j)}w_{ij}\sum_{q\in N^+(j)}w_{jq}$.

\medskip\noindent\sloppypar Equation (\ref{EX2eq}) and the subsequent computations imply that
$4\cdot \mathbb{E}(X^2)= W^{(2)}+ \frac{1}{3}(Q-R),$
where
\newcommand{\xleft}{\big} 
\newcommand{\xright}{\big} 
$$Q = \sum_{i\in  V} \left( \xleft(\sum_{j\in N^+(i)}w_{ij}  \xright)^2-\sum_{j\in
  N^+(i)}w^2_{ij}+\xleft(\sum_{j\in N^-(i)}w_{ji}\xright)^2-\sum_{j\in
  N^-(i)}w^2_{ji} \right),$$ and
$$R=2\cdot \sum_{i\in V}\xleft(\sum_{j\in N^+(i)}w_{ij}\xright)\xleft(\sum_{j\in
  N^-(i)}w_{ji}\xright).$$
By the inequality of arithmetic and geometric means, for each $i\in V$, we have
\[
\xleft(\sum_{j\in N^+(i)}w_{ij}\xright)^2 + \xleft(\sum_{j\in
    N^-(i)}w_{ji}\xright)^2 - 2\xleft(\sum_{j\in
    N^+(i)}w_{ij}\xright)\xleft(\sum_{j\in N^-(i)}w_{ji}\xright)\ge 0.
\]
Therefore,
\[
Q-R\ge -\sum_{i\in V}\sum_{j\in N^+(i)}w^2_{ij}-\sum_{i\in V}\sum_{j\in
  N^-(i)}w^2_{ji}=-2W^{(2)},
\]
and $4\cdot \mathbb{E}(X^2)\ge W^{(2)}-2W^{(2)}/3=W^{(2)}/3$, implying
$\mathbb{E}(X^2)\ge W^{(2)}/12$.
\end{proof}

Now we can prove the main result of this section.

\begin{theorem}\label{thmainMAS}
  The problem \textsc{LOALB} admits a kernel with $O(k^2)$ arcs.
\end{theorem}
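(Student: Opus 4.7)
The plan is to combine Reduction Rule~\ref{LO1} with the probabilistic setup preceding Lemma~\ref{lemEX2} and then apply the symmetric-variable bound of Lemma~\ref{eqsym}. First, exhaustively apply Reduction Rule~\ref{LO1} in polynomial time to reduce any input instance to an equivalent instance $(D',k)$ where $D'$ is an oriented graph (no directed 2-cycles), with integral positive weights. This preserves the yes/no answer, and it is the setting in which the random variable $X$ and Lemma~\ref{lemEX2} were defined.

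Next, I would verify that $X$ is a symmetric random variable. Let $\alpha$ be a uniformly random bijection from $V$ to $\{1,\dots,n\}$ and let $\bar\alpha(v)=n+1-\alpha(v)$. The map $\alpha\mapsto\bar\alpha$ is an involution on the set of bijections and reverses every comparison $\alpha(i)<\alpha(j)$, so $\ep_{ij}(\bar\alpha)=-\ep_{ij}(\alpha)$ for every arc $ij$ and hence $X(\bar\alpha)=-X(\alpha)$. Because $\bar\alpha$ is also uniformly distributed, $X$ and $-X$ have the same distribution, i.e., $X$ is symmetric. Since $X$ takes only finitely many values, all its moments exist, so Lemma~\ref{eqsym} applies and yields a bijection $\alpha$ with
\[
X(\alpha)\;\ge\;\sqrt{\mathbb{E}(X^2)}\;\ge\;\sqrt{W^{(2)}/12},
\]
using Lemma~\ref{lemEX2}.

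Now comes the kernelization dichotomy. If $\sqrt{W^{(2)}/12}\ge k$, that is, $W^{(2)}\ge 12k^2$, then the bijection above gives an acyclic subdigraph of weight at least $W/2+k$, so we output \textsc{yes}. Otherwise $W^{(2)}<12k^2$. Since every weight $w_{ij}$ is a positive integer, $w_{ij}^2\ge 1$, and therefore
\[
|A|\;\le\;\sum_{ij\in A}w_{ij}^2\;=\;W^{(2)}\;<\;12k^2.
\]
Thus the reduced instance $(D',k)$ has $O(k^2)$ arcs and is the desired kernel; fixed-parameter tractability follows by brute-force on the kernel.

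The only subtle point I expect is the symmetry of $X$: it is crucial that the reduction rule has already eliminated 2-cycles, since otherwise the earlier setup does not apply and the reversal argument would have to be re-examined. Everything else—verifying that Reduction Rule~\ref{LO1} preserves positive integer weights and the yes/no answer, and the arithmetic bound $|A|\le W^{(2)}$—is routine.
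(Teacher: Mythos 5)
Your proposal is correct and follows the paper's own argument essentially verbatim: reduce via Rule~\ref{LO1} to an oriented graph, establish symmetry of $X$ through the reversal $\alpha\mapsto n+1-\alpha$, apply Lemmas~\ref{eqsym} and~\ref{lemEX2}, and conclude with the dichotomy $|A|\le W^{(2)}<12k^2$ using integrality of the weights. No substantive differences from the paper's proof.
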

\begin{proof}
Let $H$ be a digraph. We know that the answer to LOALB for $H$ is {\sc yes} if
and only if the answer to LOALB is {\sc yes} for a digraph $D$ obtained from $H$
using Reduction Rule \ref{LO1} as long as possible. Observe that $D$ is an oriented graph.
Let $\cal B$ be the set of bijections from $V$ to $\{1,\ldots,n\}$.
Observe that $f:\ {\cal B}\dom {\cal B}$ such that $f(\alpha(v))=|V|+1-\alpha(v)$ for each $\alpha\in{\cal B}$
is a bijection. Note that $X(f(\alpha))=-X(\alpha)$ for each $\alpha\in{\cal B}$. Therefore,
$\Prob(X=a)=\Prob(X=-a)$ for each real $a$ and, thus,
$X$ is symmetric. Thus, by Lemmas \ref{eqsym} and
  \ref{lemEX2}, we have $\Prob(\ X\ge \sqrt{W^{(2)}/12}\ )>0$. Hence, if
  $\sqrt{W^{(2)}/12}\ge k$, there is a bijection $\alpha: V\dom \{1,\ldots ,n\}$
  such that $X(\alpha)\ge k$ and, thus, the answer to LOALB (for both $D$ and $H$) is {\sc
    yes}. Otherwise, $|A|\le W^{(2)}<12\cdot k^2$.
\end{proof}

  \noindent We close this section by outlining how Theorem~\ref{thmainMAS} can be used to
  actually find a solution to \textsc{LOALB} if one exists.  Let $(D,k)$ be an
  instance of \textsc{LOALB} where $D=(V,A)$ is a directed graph with integral
  positive arc-weights and $k\geq 1$ is an integer.  Let $W$ be the total weight
  of $D$. As discussed above, we may assume that $D$ is an oriented graph.  If
  $\Card{A}<12k^2$ then we can find a solution, if one exists, by trying all
  subsets $A'\subseteq A$, and testing whether $(V,A')$ is acyclic and has
  weight at least $W/2+k$; this search can be carried out in time $2^{O(k^2)}$.
  Next we assume $\Card{A} \geq 12k^2$. We know by Theorem~\ref{thmainMAS} that
  $(D,k)$ is a \textsc{yes}-instance; it remains to find a solution.

  For a vertex $i\in V$ let $d_D(i)$ denote its unweighted degree in $D$, i.e.,
  the number of arcs (incoming or outgoing) that are incident with~$i$.
  Consider the following reduction rule:
  \begin{krule}\label{LO2} If there is a vertex $i\in V$ with
  $\Card{A}-12k^2 \geq d_D(i)$, then delete $i$ from~$D$.
  \end{krule}
  Observe that by applying the rule we obtain again a \textsc{yes}-instance $(D-i,k)$ of
  \textsc{LOALB} since $D-i$ has still at least $12k^2$ arcs.  Moreover, if we
  know a solution $D_i'$ of $(D-i,k)$, then we can efficiently obtain a solution
  $D'$ of $(D,k)$: if $\sum_{j\in N^+(i)} w_{ij} \geq \sum_{j\in N^-(i)} w_{ij}$
  then we add $i$ and all outgoing arcs $ij\in A$ to~$D_i'$; otherwise, we add
  $i$ and all incoming arcs $ji\in A$ to $D_i'$. After multiple applications of
  Rule \ref{LO2} we are left with an instance $(D_0,k)$ to which Rule \ref{LO2} cannot be
  applied.  Let $D_0=(V_0,A_0)$.  We pick a vertex $i\in V_0$. If $i$ has a
  neighbor $j$ with $d_{D_0}(j)= 1$, then $\Card{A_0} \leq 12k^2$, since
  $\Card{A_0}-d_{D_0}(j) < 12k^2$.  On the other hand, if $d_{D_0}(j)\geq 2$ for
  all neighbors $j$ of $i$, then $i$ has less than $2\cdot 12k^2$ neighbors,
  since $D_0-i$ has less than $12k^2$ arcs; thus $\Card{A_0}<3\cdot
  12k^2$. Therefore, as above, time $2^{O(k^2)}$ is sufficient to try all
  subsets $A_0'\subseteq A_0$ to find a solution to the instance $(D_0,k)$. Let
  $n$ denote the input size of instance~$(D,k)$. Rule \ref{LO2} can
  certainly be applied in polynomial time $n^{O(1)}$, and we apply it less than
  $n$ times. Hence, we can find a solution to $(D,k)$, if one exists, in time
  $n^{O(1)}+2^{O(k^2)}$.

Recall that a kernelization reduces in polynomial time an instance
$(I,k)$ of a parameterized problem to a
\emph{decision-equivalent} instance $(I',k')$, its \emph{problem
kernel}, where $k'\leq k$ and the size of $I'$ is bounded by a
function of~$k$. Solutions for $(I,k)$ and solutions for $(I',k')$ are
possibly unrelated to each other. We call $(I',k')$ a \emph{faithful
problem kernel} if from a solution for $(I',k')$ we can construct a
solution for $(I,k)$ in time polynomial in $|I|$ and $k$.  Clearly the above
$(D_0,k)$ is a faithful kernel.



\section{Max Lin-2}\label{secLin}

Consider a system of $m$ linear equations $e_1,\ldots, e_m$ in $n$
variables $z_1, \ldots ,z_n$ over $\GF(2)$, and suppose that each equation
$e_j$ has a positive integral weight $w_j$, $j=1,\ldots ,m$. The problem
\textsc{Max Lin-2} asks for an assignment of values to the variables that
maximizes the total weight of the satisfied equations. Let $W=w_1+\cdots +w_m$.

To see that the total weight of the equations that can be satisfied is at least
$W/2$, we describe a simple procedure suggested in \cite{HastadVenkatesh02}.
We assign values to the variables $z_1,\dots,z_n$ one by one and
simplify the system after each assignment. When we wish to assign 0 or 1 to
$z_i$, we consider all equations reduced to the form $z_i = b$, for a constant
$b$. Let $W'$ be the total weight of all such equations. We set $z_i:=0$, if the total
weight of such equations is at least $W'/2$, and set $z_i:=1$, otherwise.
If there are no equations of the form $z_i = b$, we set $z_i:=0.$
To see that the lower bound $W/2$ is tight, consider a system consisting of
pairs of equations of the form $\sum_{i\in I} z_i = 1$ and $\sum_{i\in I} z_i
= 0$ where both equations have the same weight.

The parameterized complexity of \textsc{Max Lin-2} parameterized above the tight
lower bound $W/2$ was stated by Mahajan, Raman and
Sikdar~\cite{MahajanRamanSikdar09} as an open question:
\begin{quote}
  \textsc{Max Lin-2 Parameterized Above Tight Lower Bound}
  (\textsc{LinALB})\nopagebreak

  \emph{Instance:} A system $S$ of $m$ linear equations $e_1,\ldots,e_m$ in $n$
  variables $z_1, \ldots ,z_n$ over $\GF(2)$, each equation $e_i$ with a
  positive integral weight $w_i$, $i=1,2,\ldots,m$, and a positive integer~$k$.
  Each equation $e_j$ can be written as $\sum_{i\in I_j}z_i=b_j$, where $\emptyset\neq I_j\subseteq \{1,\ldots ,n\}$.
    \nopagebreak

  \emph{Parameter:} The integer~$k$. \nopagebreak

  \emph{Question:} Is there an assignment of values to the variables
  $z_1,\dots,z_n$ such that the total weight of the satisfied
  equations is at least $W/2+k$, where $W=\sum_{i=1}^m w_i$?
\end{quote}
Let $r_j$ be the number of variables in equation $e_j$, and let
$r(S)=\max_{i=1}^m r_j$. We are not able to determine whether \textsc{LinALB} is
fixed-parameter tractable or not, but we can prove that the following three special cases are
fixed-parameter tractable: (1) there is a set $U$ of variables such that each equation contains an
odd number of variables from $U$, (2) there is a constant $r$
such that $r(S)\le r$, (3) there is a constant $\rho$
such that any variable appears in at most $\rho$ equations.

Notice that in our formulation of \textsc{LinALB} it is required that each
equation has a positive integral weight. In a relaxed setting in which an
equation may have any positive real number as its weight, the problem is
$\NP$-complete even for $k=1$ and each $r_j=2$. Indeed, let each linear equation
be of the form $z_u+z_v=1$. Then the problem is equivalent to \textsc{MaxCut},
the problem of finding a cut of total weight at least $L$ in an undirected graph
$G$, where $V(G)$ is the set of variables, $E(G)$ contains $(z_u,z_v)$ if and
only if there is a linear equation $z_u+z_v=1$, and the weight of an edge
$(z_u,z_v)$ equals the weight of the corresponding linear equation. The problem
\textsc{MaxCut} is a well-known $\NP$-complete problem.  Let us transform an
instance $I$ of \textsc{MaxCut} into an instance $I'$ of the ``relaxed''
\textsc{LinALB} by replacing the weight $w_i$ by $w'_i:=w_i/(L-W/2)$.  We may
assume that $L-W/2>0$ since otherwise the instance is immediately seen as a {\sc
  yes}-instance. Observe that the new instance $I'$ has an assignment of values
with total weight at least $W'/2+1$ if and only if $I$ has a cut with total
weight at least $L$.  We are done.

Let $A$ be the matrix of the coefficients of the variables in $S$.
It is well-known that the maximum number
of linearly independent columns of $A$
equals ${\rm rank} A$, and such a collection of columns can be found in time polynomial
in $n$ and $m$, using, e.g., the Gaussian elimination on columns \cite{Shores}. We have the
following reduction rule and supporting lemma.

\begin{krule}\label{rulerank}
Let $A$ be the matrix of the coefficients of the variables in $S$,
let $t={\rm rank} A$ and let columns $a^{i_1},\ldots ,a^{i_t}$ of $A$ be linearly independent.
Then delete all variables not in $\{z_{i_1},\ldots ,z_{i_t}\}$ from the equations of $S$.
\end{krule}

\begin{lemma}\label{mGEn}
Let $T$ be obtained from $S$ by Rule~\ref{rulerank}.
Then $T$ is a {\sc yes}-instance if and only if $S$ is a {\sc yes}-instance.
Moreover, $T$ can be obtained from $S$  in time polynomial in $n$ and $m$.
\end{lemma}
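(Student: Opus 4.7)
The plan is to reduce the equivalence to a simple linear-algebra identity over $\GF(2)$, and to appeal to Gaussian elimination for the running-time claim.

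First, note that Gaussian elimination on the columns of $A$ identifies in time polynomial in $n$ and $m$ a maximal set of linearly independent columns $a^{i_1},\ldots,a^{i_t}$. Deleting all other variables from every equation is clearly polynomial, so the ``moreover'' part of the lemma is immediate. The core work is to show that $T$ and $S$ are equivalent \textsc{yes}/\textsc{no}-instances; since the weights $w_j$ and hence $W$ and $k$ are unchanged, it suffices to show that $S$ and $T$ admit assignments satisfying the same total weight of equations.

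The key algebraic fact is this: because $\{a^{i_1},\ldots,a^{i_t}\}$ spans the column space of $A$, every column $a^c$ admits a representation $a^c = \sum_{s=1}^{t} \mu_{c,s}\, a^{i_s}$ with $\mu_{c,s}\in \GF(2)$; equivalently, for every equation index $j$,
\[
A_{j,c} \;=\; \sum_{s=1}^{t} \mu_{c,s}\, A_{j,i_s}.
\]
Using this, I will establish the two directions of the equivalence as follows.

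For the easy direction, assume $T$ has an assignment $(z_{i_1},\ldots,z_{i_t})$ satisfying equations of total weight at least $W/2+k$. Extend it to an assignment of $S$ by setting $z_c := 0$ for every $c\notin\{i_1,\ldots,i_t\}$. Then the left-hand side of $e_j$ in $S$ collapses to the left-hand side of $e_j$ in $T$, so exactly the same equations are satisfied. For the harder direction, assume $S$ has an assignment $(z_1,\ldots,z_n)$ satisfying equations of total weight at least $W/2+k$. Define $y_s := \sum_{c=1}^{n} \mu_{c,s}\, z_c$ for $s=1,\ldots,t$. Substituting the identity displayed above gives, for every equation $e_j$,
\[
\sum_{c=1}^{n} A_{j,c}\, z_c \;=\; \sum_{c=1}^{n} \sum_{s=1}^{t} \mu_{c,s} A_{j,i_s}\, z_c \;=\; \sum_{s=1}^{t} A_{j,i_s}\, y_s,
\]
which is exactly the left-hand side of $e_j$ in $T$ evaluated at $z_{i_s} := y_s$. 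Hence the assignment $(y_1,\ldots,y_t)$ satisfies the same equations of $T$ as $(z_1,\ldots,z_n)$ satisfied in $S$, and the total weight is preserved.

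There is no real obstacle beyond bookkeeping: the only thing to be careful about is to verify the linear-combination identity $a^c = \sum_s \mu_{c,s}\, a^{i_s}$ holds coordinatewise, which is immediate from the choice of $a^{i_1},\ldots,a^{i_t}$ as a basis of the column space. The argument generalises without change: one could also observe in passing that $\mu_{i_s,s'} = \delta_{s,s'}$ by linear independence, which shows that the ``extension by zero'' in the easy direction and the ``projection via $\mu$'' in the hard direction are mutually consistent on the retained variables.
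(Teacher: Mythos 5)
Your proposal is correct and takes essentially the same route as the paper's proof: both rest on writing each deleted column as a $\GF(2)$-combination of the retained basis columns and transferring an assignment of $S$ to one of $T$ that satisfies exactly the same equations (your explicit map $y_s=\sum_{c}\mu_{c,s}z_c$ is just a cleaner rendering of the paper's variable-elimination step), with the converse handled identically by extending an assignment of $T$ by zeros. The polynomial-time claim is likewise obtained in both cases from Gaussian elimination on the columns of $A$.
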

\begin{proof}
If $t=n$, set $T:=S$, so assume that $t<n.$ The remark before the lemma immediately implies that $T$ can be obtained
from $S$  in time polynomial in $n$ and $m$.
Let $S'$ be a system of equations from $S$ and
let $T'$ be the corresponding system of equations from $T$.
It is sufficient to prove the following claim:

{\em There is an assignment of values to $z_1,\ldots ,z_n$ satisfying all equations in $S'$ and falsifying
the rest of equations in $S$ if and only if there is an  assignment of values to $z_{i_1},\ldots ,z_{i_t}$ satisfying all equations in $T'$  and falsifying
the rest of equations in $T$.}

Let an assignment $z^0$ of values to $z=(z_1,\ldots ,z_n)$ satisfy all equations of  $S'$ and falsify the equations of $S''$, where $S''=S\setminus S'.$
This assignment satisfies all equations of $R$, the system obtained from $S$ by replacing the right hand side $b_j$ of each equation in $S''$ by
$1-b_j.$ Note that $R$ has the same matrix $A$ of coefficients with columns $a^1,\ldots,a^n.$
Let a column $a^i\not\in \{a^{i_1},\ldots ,a^{i_t}\}.$ Then,
by definition of $a^{i_1},\ldots ,a^{i_t}$, $a^i=\lambda_1a^{i_1}+\cdots +\lambda_ta^{i_t}$
for some numbers $\lambda_j\in \{0,1\}.$
Knowing the numbers  $\lambda_j$, we may eliminate a variable $z_i$ from $R$ by replacing $a^i$ with the sum of all columns
from $\{a^{i_1},\ldots ,a^{i_t}\}$ for which $\lambda_j=1$ and carrying out the obvious simplification of the system.
Thus, we may eliminate from $R$ all variables $z_i\not\in \{z_{i_1},\ldots ,z_{i_t}\}$ and get $y_{i_1}a^{i_1}+\cdots +y_{i_t}a^{i_t}=b',$ where
$b'$ is the right hand side of $R$ and each $y_j\in \{0,1\}.$ Now replace, in the modified $R$, the right hand side $b'_j$ of each
equation corresponding to an equation in $S''$ by $1-b'_j$ obtaining $T$.
Clearly, $(y_{i_1},\ldots ,y_{i_t})$ satisfies all equations of $T'$ and falsifies all equations
in $T''=T\setminus T'.$

Suppose now that $(y_{i_1},\ldots ,y_{i_t})$ satisfies all equations of $T'$ and falsifies all equations
in $T''$. Then $(y_1,\ldots ,y_n)$, where $y_j=0$ if $j\not\in \{i_1,\ldots ,i_t\}$, satisfies all equations of $S'$ and falsifies all equations
in $S''$. Thus, the claim has been proved.
\end{proof}

Consider the following reduction rule for \textsc{LinALB} used in \cite{HastadVenkatesh02}.
\begin{krule}\label{rule1}
If we have, for a subset $I$ of $\{1,2,\ldots ,n\}$, the equation $\sum_{i \in I} z_i =b'$
with weight $w'$, and the equation $\sum_{i \in I} z_i =b''$ with weight $w''$,
then we replace this pair by one of these equations with weight $w'+w''$ if $b'=b''$ and, otherwise, by
the equation whose weight is bigger, modifying its
new weight to be the difference of the two old ones. If the resulting weight
is~0, we omit the equation from the system.
\end{krule}

If Rule \ref{rule1} is not applicable to a system we call the system {\em reduced under Rule \ref{rule1}}. Note that the
problem \textsc{LinALB} for $S$ and the system obtained from $S$ by applying Rule \ref{rule1} as long as possible have the
same answer.

Let $I_j \subseteq \{1, \ldots , n\}$ be the set of indices of the variables
participating in equation $e_j$, and let $b_j \in \{0,1\}$ be the right hand
side of $e_j$. Define a random variable $X=\sum_{j=1}^m X_j$, where
$X_j=(-1)^{b_j} w_j \prod_{i \in I_j} \ep_i$ and all the $\ep_i$ are independent
uniform random variables on $\{-1,1\}$ ($X$ was first introduced
in~\cite{HastadVenkatesh02}). We set $z_i=0$ if $\ep_i=1$ and $z_i=1$, otherwise, for each $i$. Observe that
$X_j=w_j$ if $e_j$ is satisfied and $X_j=-w_j$, otherwise.

\begin{lemma}\label{Xlin}
Let $S$ be reduced under Rule \ref{rule1}. The weight of the satisfied equations is at
least $W/2+k$ if and only if $X\ge 2k$. We have $\mathbb{E}(X) =0$ and $\mathbb{E}(X^2)=\sum_{j=1}^m
w^2_j.$
\end{lemma}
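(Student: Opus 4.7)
The plan is to handle the three claims separately; each reduces to a short, mechanical calculation once the right observation is made.

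First I would translate the random signs into satisfaction. The convention $\ep_i = 1 \Leftrightarrow z_i = 0$ gives, as integers, $\ep_i = (-1)^{z_i}$ and therefore $\prod_{i\in I_j}\ep_i = (-1)^{\sum_{i\in I_j}z_i}$. Hence
\[
X_j = (-1)^{b_j}w_j\prod_{i\in I_j}\ep_i = (-1)^{b_j+\sum_{i\in I_j}z_i}\,w_j,
\]
which equals $+w_j$ exactly when $\sum_{i\in I_j}z_i\equiv b_j\pmod 2$, i.e.\ when $e_j$ is satisfied, and $-w_j$ otherwise. Summing, $X = 2W_{\mathrm{sat}}-W$, so the inequality $W_{\mathrm{sat}}\ge W/2+k$ is equivalent to $X\ge 2k$. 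This step does not use that $S$ is reduced.

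Second, for the mean, I would use that every $I_j$ is nonempty (part of the problem statement) and that the $\ep_i$ are independent with $\mathbb{E}(\ep_i)=0$, so $\mathbb{E}(X_j)=(-1)^{b_j}w_j\prod_{i\in I_j}\mathbb{E}(\ep_i)=0$; linearity yields $\mathbb{E}(X)=0$.

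Third, for the second moment, expand
\[
\mathbb{E}(X^2)=\sum_{j=1}^m \mathbb{E}(X_j^2)+\sum_{j\neq \ell}\mathbb{E}(X_jX_\ell).
\]
Using $\ep_i^2=1$, the diagonal terms give $\mathbb{E}(X_j^2)=w_j^2$, hence $\sum_j w_j^2$. For the off-diagonal terms, squaring again collapses factors in $I_j\cap I_\ell$, leaving
\[
X_jX_\ell=(-1)^{b_j+b_\ell}w_jw_\ell\prod_{i\in I_j\triangle I_\ell}\ep_i,
\]
whose expectation is $0$ unless $I_j\triangle I_\ell=\emptyset$, i.e.\ $I_j=I_\ell$. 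This is the one place where the hypothesis that $S$ is reduced under Rule~\ref{rule1} is used: two distinct equations with identical variable sets would have been combined (or cancelled) by Rule~\ref{rule1}, so no such pair survives and every cross term vanishes. This gives $\mathbb{E}(X^2)=\sum_{j=1}^m w_j^2$.

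The only conceptual step is the last one, namely noticing that ``reduced under Rule~\ref{rule1}'' is precisely what is needed to kill the cross terms; everything else is bookkeeping with the identity $\ep_i^2=1$ and independence of the $\ep_i$.
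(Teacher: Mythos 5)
Your proof is correct and follows essentially the same route as the paper: the identification $X_j=\pm w_j$ according to satisfaction of $e_j$, independence of the $\ep_i$ for the mean, and the collapse of cross terms to $\prod_{i\in I_j\triangle I_\ell}\ep_i$, with Rule~\ref{rule1} guaranteeing $I_j\triangle I_\ell\neq\emptyset$ so that these terms vanish. No gaps; your explicit check that $X_j=(-1)^{b_j+\sum_{i\in I_j}z_i}w_j$ merely spells out what the paper asserts before the lemma.
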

\begin{proof}
Observe that $X$ is the
difference between the weights of satisfied and non-satisfied equations.
Therefore, the weight of the satisfied equations equals $(X+W)/2$, and it is at
least $W/2+k$ if and only if $X\ge 2k$.
Since $\ep_i$ are independent, $\mathbb{E}(\prod_{i \in I_j} \ep_i)=\prod_{i \in I_j}\mathbb{E}( \ep_i)=0$. Thus,
$\mathbb{E}(X_j) =0$ and $\mathbb{E}(X) =0$ by linearity of expectation. Moreover,
\[
\mathbb{E}(X^2) = \sum_{j=1}^m
\mathbb{E}(X_j^2)+\sum_{1\le j\neq q\le m} \mathbb{E}(X_jX_q)=\sum_{j=1}^m
w^2_j>0
\] as  $\mathbb{E}(\prod_{i \in I_j} \ep_i\cdot \prod_{i \in I_q} \ep_i)=\mathbb{E}(\prod_{i \in I_j\Delta I_q} \ep_i)=0$
implies $\mathbb{E}(X_jX_q)=0,$ where $I_j\Delta I_q$ is the symmetric difference between $I_j$ and $I_q$
($I_j\Delta I_q\neq \emptyset$ due to Reduction Rule \ref{rule1}).
\end{proof}

\begin{lemma}\label{EX4}
Let $S$ be reduced under Rule \ref{rule1} and suppose that no variable appears in more than $\rho\ge 2$ equations of $S$. Then $\mathbb{E}(X^4)\le 2\rho^2(\mathbb{E}(X^2))^2.$
\end{lemma}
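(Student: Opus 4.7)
The plan is to expand the fourth moment and regroup using a Fourier-like decomposition. With $c_j := (-1)^{b_j} w_j$, direct expansion gives
\[
\mathbb{E}(X^4) = \sum_{j_1, j_2, j_3, j_4} c_{j_1} c_{j_2} c_{j_3} c_{j_4}\; \mathbb{E}\Big(\prod_{l=1}^{4}\prod_{i\in I_{j_l}}\ep_i\Big),
\]
and by independence and uniformity of the $\ep_i$ the inner expectation equals $1$ when $I_{j_1}\Delta I_{j_2}\Delta I_{j_3}\Delta I_{j_4}=\emptyset$ and $0$ otherwise. Viewing each $I_j$ as a vector in $\GF(2)^n$ and grouping 4-tuples by the common value $v := I_{j_1}\Delta I_{j_2}$ (which must equal $I_{j_3}\Delta I_{j_4}$ for the expectation to survive), the sum factors as
\[
\mathbb{E}(X^4)=\sum_{v} A_v^2, \qquad A_v := \sum_{(j_1,j_2):\, I_{j_1}\Delta I_{j_2}=v} c_{j_1} c_{j_2}.
\]

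I would then handle the two cases $v=\emptyset$ and $v\neq\emptyset$ separately. For $v=\emptyset$, the equality $I_{j_1}=I_{j_2}$ forces $j_1=j_2$ since Rule~\ref{rule1} has made all index sets distinct; hence $A_\emptyset=\sum_j w_j^2=\mathbb{E}(X^2)$ by Lemma~\ref{Xlin}, contributing exactly $(\mathbb{E}(X^2))^2$. For $v\neq\emptyset$, I would bound the count $B_v$ of ordered pairs $(j_1,j_2)$ with $I_{j_1}\Delta I_{j_2}=v$: pick any variable $l\in v$; then exactly one of $j_1,j_2$ indexes an equation containing $l$, and there are at most $\rho$ such equations by hypothesis. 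The other index is uniquely determined by $I_{j_2}=I_{j_1}\Delta v$ together with the distinctness of index sets, so $B_v\le 2\rho$. Cauchy--Schwarz applied to $A_v$ and summation over $v\neq\emptyset$ then yield
\[
\sum_{v\neq\emptyset} A_v^2 \;\le\; 2\rho \sum_{j_1\neq j_2} w_{j_1}^2 w_{j_2}^2 \;\le\; 2\rho\, (\mathbb{E}(X^2))^2.
\]

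Combining the two contributions gives $\mathbb{E}(X^4)\le(1+2\rho)(\mathbb{E}(X^2))^2$, and since $\rho\ge 2$ we have $1+2\rho\le 2\rho^2$, which finishes the proof. The main obstacle is the inequality $B_v\le 2\rho$: it is the sole place the degree hypothesis is used, and it relies essentially on the fact that after Rule~\ref{rule1} no two equations share the same index set. All other steps are routine orthogonality and Cauchy--Schwarz computations.
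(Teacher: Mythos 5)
Your proof is correct, and it takes a genuinely different route from the paper's. You encode each equation by its support $I_j\in\GF(2)^n$ and apply Parseval to $X^2$: $\mathbb{E}(X^4)=\sum_v A_v^2$ with $A_v=\sum_{I_{j_1}\Delta I_{j_2}=v}c_{j_1}c_{j_2}$, isolate the contribution $A_\emptyset^2=(\mathbb{E}(X^2))^2$ (using that Rule \ref{rule1} makes the supports pairwise distinct), and for $v\neq\emptyset$ bound the number of ordered pairs realizing $v$ by $2\rho$ (fix $l\in v$: at most $\rho$ choices for whichever of the two equations contains $l$, times two for which position it is, and the partner is then forced since $I_{j_2}=I_{j_1}\Delta v$ and supports are distinct), finishing with Cauchy--Schwarz. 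This even yields the sharper bound $(1+2\rho)(\mathbb{E}(X^2))^2\le 2\rho^2(\mathbb{E}(X^2))^2$ for $\rho\ge 2$. The paper instead works term by term in $\sum_{(p,q,s,t)}\mathbb{E}(X_pX_qX_sX_t)$, classifying the nonzero tuples (all four indices equal; two distinct indices each repeated twice; four distinct indices with every $\ep_i$ of even total degree), bounding the last case via $w_pw_qw_sw_t\le(w_p^2w_q^2+w_s^2w_t^2)/2$ and counting at most $4(\rho-1)^2$ such tuples per pair $\{j,l\}$, which gives the coefficient $6+4(\rho-1)^2<4\rho^2$. Both arguments rest on exactly the same two structural facts---distinctness of supports after Rule \ref{rule1} and the occurrence bound $\rho$---but your convolution/Parseval grouping organizes the count more cleanly and gives a constant linear rather than quadratic in $\rho$; the paper's explicit case analysis has the side benefit that its Case~3 tuples are reused verbatim in Remark~2 to show that the fourth-moment bound fails for general systems.
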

\begin{proof}
Observe that \begin{equation}\label{x4eq}\mathbb{E}(X^4)=\sum_{(p,q,s,t)\in [m]^4}\mathbb{E}(X_pX_qX_sX_t),\end{equation}
where $[m]=\{1,\ldots ,m\}.$
Note that if the product $X_pX_qX_sX_t$ contains a variable $\ep_i$ in only one or three of the factors, then $\mathbb{E}(X_pX_qX_sX_t)=A\cdot \mathbb{E}(\ep_i)=0,$
where $A$ is a polynomial in random variables $\ep_l$, $l\in \{1,\ldots ,n\}\setminus \{i\}.$
Thus, the only nonzero terms in (\ref{x4eq}) are those for which either (1) $p=q=s=t$, or (2) there are two distinct integers $j,l$ such that each of them coincides with two elements in the sequence $p,q,s,t$, or (3) $|\{p,q,s,t\}|=4$, but each variable $\ep_i$ appears in an even number of the factors in $X_pX_qX_sX_t$. In Cases 1 and 2, we have $\mathbb{E}(X_pX_qX_sX_t)=w^4_p$ and  $\mathbb{E}(X_pX_qX_sX_t)=w_j^2w^2_l,$ respectively.
In Case 3,  $$\mathbb{E}(X_pX_qX_sX_t)\le  w_pw_qw_sw_t\le (w_p^2w_q^2+w_s^2w_t^2)/2.$$

Let $1\le j< l\le m$. Observe that $\mathbb{E}(X_pX_qX_sX_t)=w_j^2w^2_l$ in Case 2 for ${4 \choose 2}=6$
4-tuples $(p,q,s,t)\in [m]^4.$ In Case 3, we claim that $j,l\in \{p,q,s,t\}$ for at most $4\cdot (\rho-1)^2$
4-tuples $(p,q,s,t)\in [m]^4$.  To see this, first note that
$w_p^2w_q^2$ and $w_s^2w_t^2$ appear in our upper bound on
$\mathbb{E}(X_pX_qX_sX_t)$ (with coefficient 1/2).
Therefore, there are only four possible ways for $w_j^2w_l^2$ to appear in our upper bound, namely the following: (i) $j=p, l=q$,
(ii) $l=p, j=q$, (iii) $j=s, l=t$, and (iv) $l=s, j=t.$
Now assume, without loss of generality, that $j=p$ and $l=q$.
Since $S$ is reduced under Rule \ref{rule1}, the product $X_jX_l$ must have a variable $\ep_i$ of degree one. Thus, $\ep_i$ must be in $X_s$ or $X_t$,
but not in both (two choices). Assume that $\ep_i$ is in $X_s$.
Observe that there are at most $\rho-1$ choices for $s$. Note that $X_jX_lX_s$
must contain contain a variable $\ep_{i'}$ of odd degree.
Thus, $\ep_{i'}$ must be in $X_t$ and, hence, there are at most $\rho-1$ choices for $t$.

Therefore, we have $$\mathbb{E}(X^4)\le \sum_{j=1}^m w^4_j + (6+4(\rho-1)^2)\sum_{1\le j< l\le m}w_j^2w_l^2<2\rho^2\left(\sum_{j=1}^m
w^2_j\right)^2.$$ Thus, by Lemma \ref{Xlin}, $\mathbb{E}(X^4)\le 2\rho^2(\mathbb{E}(X^2))^2.$
\end{proof}

Case 1 of Theorem \ref{thLin} is of interest since its condition can be checked in polynomial time due to the following:

\begin{proposition}
We can check, in polynomial time, whether there exists a set $U$ of
variables such that each equation of $S$ contains an odd number of variables from $U$.
\end{proposition}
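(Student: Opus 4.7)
The plan is to reformulate the existence of such a set $U$ as the solvability of a linear system over $\GF(2)$, which can then be decided by Gaussian elimination.

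First I would encode a candidate set $U \subseteq \{z_1,\ldots,z_n\}$ by its characteristic vector $u=(u_1,\ldots,u_n)\in \GF(2)^n$, where $u_i=1$ iff $z_i \in U$. For each equation $e_j = \sum_{i\in I_j} z_i = b_j$, the number of variables of $e_j$ lying in $U$ is $\sum_{i\in I_j} u_i$ (over the integers), and this number is odd if and only if $\sum_{i\in I_j} u_i = 1$ in $\GF(2)$.

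Next I would observe that this condition, taken over all $j=1,\ldots,m$, is precisely the linear system $Au = \mathbf{1}$ over $\GF(2)$, where $A$ is the $m\times n$ coefficient matrix of $S$ (the same matrix $A$ appearing in Rule~\ref{rulerank}) and $\mathbf{1}$ is the all-ones column vector of length~$m$. Thus a set $U$ with the required property exists if and only if this system has a solution.

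Finally I would invoke the standard fact that the consistency of a linear system over $\GF(2)$ can be tested by Gaussian elimination in time polynomial in $m$ and $n$ (see, e.g., \cite{Shores}): the system $Au=\mathbf{1}$ is solvable iff $\mathrm{rank}(A) = \mathrm{rank}([A\mid \mathbf{1}])$. There is no real obstacle here; the only point to be careful about is to keep all arithmetic in $\GF(2)$ rather than over $\mathbb{Z}$, since it is the parity, not the actual cardinality, of $|I_j \cap U|$ that we control.
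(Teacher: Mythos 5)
Your proposal is correct and follows essentially the same route as the paper: the system $Au=\mathbf{1}$ you write down is exactly the paper's system $S'$ obtained by replacing each right-hand side $b_j$ with $1$, and both arguments finish by testing solvability via Gaussian elimination in polynomial time.
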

\begin{proof}
Observe that such a set $U$ exists if and only if the unweighted system $S'$ of linear
equations over $\GF(2)$ obtained from $S$ by replacing each $b_j$ with 1 has a solution. Indeed, if $U$ exists, set
$z_j=1$ for each $z_j\in U$ and $z_j=0$ for each $z_j\not\in U$. This assignment is a solution to $S'.$ If a solution to $S'$ exists,
form $U$ by including in it all variables $z_j$ which equal 1 in the solution. We can check whether $S'$ has a solution using the Gaussian elimination
or other polynomial-time algorithms, see, e.g., \cite{Copper93}.
\end{proof}

Now we can prove the following:
\begin{theorem}\label{thLin} 
Let $S$ be reduced under Rule \ref{rule1}. The following three special cases of \textsc{LinALB} are fixed-parameter tractable: (1) there is a set $U$ of
variables such that each equation contains an odd number of variables from $U$, (2) there is a constant $r$ such that $r(S)\le r$, (3) there is a constant $\rho$,
such that any variable appears in at most $\rho$ equations.
In each case, there exists a kernel with $O(k^2)$
equations and variables.
\end{theorem}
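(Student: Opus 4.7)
The plan is, in each of the three cases, to apply the probabilistic method of Section~\ref{sect:ineq} to the random variable $X=\sum_j X_j$ of Lemma~\ref{Xlin}. Because $S$ is reduced under Rule~\ref{rule1}, that lemma gives $\mathbb{E}(X)=0$ and $\mathbb{E}(X^2)=\sum_{j=1}^m w_j^2\ge m$ (each $w_j$ is a positive integer), and the answer is \textsc{yes} if and only if $X\ge 2k$ with positive probability. In each case I will force $X$ to exceed $2k$ with positive probability once $m$ passes an explicit $O(k^2)$ threshold; whenever that fails, $m=O(k^2)$. Rule~\ref{rulerank} and Lemma~\ref{mGEn} will then bound $n$ by the rank of the coefficient matrix, hence by $m$, giving the claimed kernel with $O(k^2)$ equations and variables.

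For Case~1 I would first show that $X$ is symmetric. Let $U$ witness the hypothesis and consider the involution $\phi\colon\{-1,1\}^n\to\{-1,1\}^n$ that flips $\epsilon_i$ exactly when $z_i\in U$. Since each $|I_j\cap U|$ is odd, the monomial $\prod_{i\in I_j}\epsilon_i$ negates under $\phi$; hence $X\circ\phi=-X$. Because $\phi$ preserves the uniform measure on $\{-1,1\}^n$, $X$ and $-X$ are equidistributed, so $X$ is symmetric. Lemma~\ref{eqsym} then yields $X\ge\sqrt{\mathbb{E}(X^2)}\ge\sqrt m$ with positive probability, and whenever $m\ge 4k^2$ this implies the answer is \textsc{yes}; otherwise $m<4k^2$.

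For Cases~2 and~3, $X$ need not be symmetric, so I would combine Lemma~\ref{lem32} with a fourth-moment bound. In Case~2, $X=f(\epsilon_1,\ldots,\epsilon_n)$ is a multilinear polynomial of degree at most $r$, so Lemma~\ref{lem41} with $p=4$ gives $\mathbb{E}(X^4)\le c_4^{4r}(\mathbb{E}(X^2))^2$ where $c_4\le 2^{3/2}$. Setting $b=c_4^{4r}$, a constant depending only on $r$, Lemma~\ref{lem32} provides $X>\sqrt{\mathbb{E}(X^2)}/(4\sqrt b)$ with positive probability, so the threshold $m\ge 64bk^2$ forces a \textsc{yes}-answer. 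Case~3 is handled identically using $b=2\rho^2$ supplied by Lemma~\ref{EX4}, giving the threshold $m\ge 128\rho^2k^2$. In both cases, failure of the threshold leaves $m=O(k^2)$.

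The main subtlety I anticipate is coordinating the two reduction rules. Rule~\ref{rulerank} can destroy the structural property defining a particular case (the set $U$, the equation width, or the variable-multiplicity bound), so I would apply it only after the probabilistic argument has already bounded $m$; at that point Lemma~\ref{mGEn} assures that the rank reduction is decision-preserving, and the bound $n\le\operatorname{rank}A\le m=O(k^2)$ completes the kernel. A routine further pass of Rule~\ref{rule1} (and, if desired, Rule~\ref{rulerank} again) can only shrink the instance, so the final kernel has $O(k^2)$ equations and variables, witnessing fixed-parameter tractability in all three cases.
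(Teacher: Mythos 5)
Your proposal is correct and follows essentially the same route as the paper: symmetry of $X$ plus Lemma~\ref{eqsym} for Case~1, Lemma~\ref{lem41} (respectively Lemma~\ref{EX4}) combined with Lemma~\ref{lem32} for Cases~2 and~3, and a final application of Rule~\ref{rulerank} (deferred exactly as in the paper's Remark~1) to bound the number of variables by the rank, hence by $m=O(k^2)$. The only negligible differences are that the paper treats $\rho=1$ separately since Lemma~\ref{EX4} assumes $\rho\ge 2$ (in your version one can simply replace $\rho$ by $\max\{\rho,2\}$), and it exploits integrality of $X$ to use the threshold $2k-1$ instead of $2k$, which only affects constants.
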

\begin{proof}
{\bf Case 1.} Let $z^0=(z^0_1,\ldots,z^0_n)\in \{0,1\}^n$ be an assignment of values to the variables
$z_1,\ldots,z_n,$ and let $-z^0=(z'_1,\ldots,z'_n),$ where $z'_i=1-z^0_i$ if $z_i\in U$ and  $z'_i=z^0_i,$ otherwise, $i=1,\ldots ,n.$
Observe that $f:\ z^0\mapsto -z^0$ is a bijection on the set of assignments and $X(-z^0)=-X(z^0)$. Thus, $X$ is a symmetric random variable.
Therefore, by Lemmas \ref{eqsym} and  \ref{Xlin},  $\Prob(\ X\ge \sqrt{m}\ )\ge \Prob(\ X\ge \sqrt{\sum_{j=1}^m
w^2_j}\ )>0.$ Hence, if $\sqrt{m}\ge 2k$, the answer to \textsc{LinALB} is {\sc
yes}. Otherwise, $m<4k^2$ and after applying  Rule \ref{rulerank}, we obtain a kernel with $O(k^2)$
equations and variables.

\2

\noindent{\bf Case 2.} Since $X$ is a polynomial of degree at most $r$, it follows by Lemma \ref{lem41}
that $\mathbb{E}(X^4) \leq 2^{6r} \mathbb{E}(X^2)^2$. This inequality and the
results in the previous paragraph show that the conditions of Lemma \ref{lem32}
are satisfied and, thus,
\[\Prob\!\left(X > \frac{\sqrt{\sum_{j=1}^mw_j^2}}{4 \cdot
    8^r}\right)>0, \quad \text{implying} \quad \Prob\!\left(X >
  \frac{\sqrt{m}}{4 \cdot 8^r}\right)>0.
\]
Consequently, if $2k-1\le \sqrt{m}/(4 \cdot 8^r)$, then there is an assignment
of values to the variables $z_1,\dots,z_n$ which satisfies equations of total
weight at least $W/2+k$. Otherwise, $2k-1> \sqrt{m}/(4 \cdot 8^r)$ and
$m<16(2k-1)^264^r$. After applying  Rule \ref{rulerank}, we obtain the required kernel.

\2

\noindent{\bf Case 3.} If $\rho=1$, it is easy to find an assignment to the variables that satisfies all equations of $S$.
Thus, we may assume that $\rho\ge 2.$ To prove that there exists a kernel with  $O(k^2)$ equations, we can proceed as in Case 2,
but use Lemma \ref{EX4} rather than Lemma \ref{lem41}.
\end{proof}

\noindent{\bf Remark 1.} Note that even if $S$ does not satisfy Case 2 of the theorem, $T$,
the system obtained from $S$ using Rule \ref{rulerank},
may still satisfy Case 2. However, we have not formulated the theorem for $S$ reduced under Rule \ref{rulerank} as the reduced
system depends on the choice of a maximum linear independent collection of columns of $A$.

\2

\noindent{\bf Remark 2.}
Unfortunately, we cannot use Lemma \ref{lem32} for $X$ to prove that the whole problem \textsc{LinALB} is fixed-parameter tractable.
This is due to the fact that $\mathbb{E}(X^4) \geq \Theta(\mathbb{E}(X^2)^3)$ for some systems $S$. One such system is $\sum_{i \in I} z_i =1$
for each nonempty subset $I$ of $\{1,\ldots ,n\}.$ (Thus, $m=2^n-1$.) We set the weight $w_j=1$ for every $j=1,\ldots ,m.$ By Lemma \ref{Xlin},
$\mathbb{E}(X^2)=m$. Let $Q_n$ be the set of 4-tuples $(p,q,s,t)\in [m]^4$, where $[m]=\{1,\ldots ,m\}$,
such that $|\{p,q,s,t\}|=4$ and each variable $\ep_i$ appears in an even
number of the factors in $X_pX_qX_sX_t$. By the proof of Lemma \ref{EX4}, we have $\mathbb{E}(X^4)> \sum_{(p,q,s,t)\in Q}X_pX_qX_sX_t=|Q_n|$.
Let $(p,q,s,t)\in Q_{n-1}$ (we allow only variables $\ep_1,\ldots ,\ep_{n-1}$ in $X_p,X_q,X_s$ and $X_t$).
We can construct eight 4-tuples in $Q_n$ using $(p,q,s,t)$ as
a starting point. Indeed, we can multiply each of $X_p,X_q,X_s$ and $X_t$ by $\ep_n$, or exactly two of $X_p,X_q,X_s$ and $X_t$ by $\ep_n$, or keep
$X_p,X_q,X_s$ and $X_t$ without a change. Thus, $|Q_n|\ge 8|Q_{n-1}|$. Therefore, $\mathbb{E}(X^4)>|Q_{n-3}|\ge 8^{n-3}>m^3/8^3=\mathbb{E}(X^2)^3/8^3$
as $|Q_3|\ge 1.$

\2

A faithful kernel can be found for all three cases of Theorem \ref{thLin},
but we restrict ourselves to Case 2.
Consider an instance $(S,k)$ of \textsc{LinALB} over $n$ variables with
$r=r(S)=O(1)$, $\Card{S}=m\geq f(k,r)$, where $f(k,r)=16(2k-1)^264^r$ and assume $S$ is reduced under Reduction
Rule \ref{rule1}.  We can find a solution
for $(S,k)$ (i.e., an assignment satisfying at least $W/2+k$ equations) in
polynomial time by using the observation from \cite{AlonGutinKrivelevich04} that
the random variables $\epsilon_i$ are $4r$-wise independent and, thus, one can
use an $O(n^{2r})$-size sample space to support each $\epsilon_i$ (for more
details and the sample space construction, see \cite{AlonGutinKrivelevich04}).

Alternatively, we can use a modification of the approach given earlier for
\textsc{LOALB} to obtain in polynomial time a faithful kernel. Similar approaches can be used to
obtain faithful kernels for Cases 1 and 3.
Consider the following reduction rule:

\begin{krule}\label{rule3} If there is some variable $x$ of $S$ that occurs in at most
$|S|-f(k,r)$ equations, then remove all the equations from $S$ in which $x$
occurs.\end{krule}
Let $T$ be the system obtained from $S$ using Rule \ref{rulerank}.
Apply Rule \ref{rule3} to $T$ as long as
possible. Note that we can transform
a solution for $(T,k)$ to a solution for $(S,k)$ by assigning zero to all variables not in $T$.
To show that the number of
equations in $T$ is polynomially bounded in $k$, let $x$ be a variable of $T$
that occurs in the smallest number of equations of $T$, and let $T=T_1\cup
T_2$ where all equations in $T_1$ contain $x$ and no equation in $T_2$
contains~$x$. Since $T$ is reduced under Rule \ref{rule3}, $\Card{T_2}<f(k,r)=O(k^2)$;
thus $T_2$ involves at most $r \cdot f(k,r)=O(k^2)$ variables.  However,  by Rule \ref{rulerank}
and the choice of $x$, $x$ is the only
variable that occurs in $T_1$ but does not occur in $T_2$.  Hence
$\Card{T_1}= O( (r \cdot f(k,r))^{r-1})=O(k^{2r-2})$ and $|T|=O(k^2+k^{2r-2}).$

\section{Max Exact $r$-SAT}\label{secSAT}

Let $x=(x_1, x_2, \ldots ,x_n)$ be a vector of Boolean variables. We assume that
each variable $x_i$ attains $1$ or $-1$ (meaning \textsc{true} and
\textsc{false}, respectively). We will denote the negation of a variable $x_i$
by $-x_i$ and we let $L=\{x_1,x_2,\dots,$\hskip0pt $x_n, -x_1, -x_2,
\dots,$\hskip0pt $-x_n\}$ be the set of literals over $x_1,\dots,x_n$. Let $r\ge
2$ be a fixed integer, and let $C=\{C_1, C_2, \ldots, C_m\}$ be a set of
clauses, each involving exactly $r$ distinct literals from $L$ such that for no pair
$y,z$ of literals we have $y=-z$. Then $\CCC$ is an \emph{exact $r$-CNF
formula}.

Consider an exact $r$-CNF formula $\CCC$ with $m$ clauses and a random truth
assignment for $x$. Since the probability of a clause of $\CCC$ to be satisfied
is $1-2^{-r}$, the expected number of satisfied clauses in $\CCC$ is
$(1-2^{-r})m$. Thus, there is a truth assignment for $x$ that satisfies at least
$(1-2^{-r})m$ clauses.  This bound is tight as can be seen by considering an
$r$-CNF formula that contains all $2^r$ possible clauses over the same
$r$ variables, or by considering a disjoint union of several such $r$-CNF
formulas.


Mahajan, Raman and Sikdar~\cite{MahajanRamanSikdar09} stated the complexity of
the following problem as an open question.

\begin{quote}
  \textsc{Exact $r$-SAT Above Tight Lower Bound}
  (\textsc{$r$-SATALB})\nopagebreak

  \emph{Instance:} An exact $r$-CNF formula $\CCC$ and a positive rational
  number~$k$ with denominator $2^r$.\nopagebreak

  \emph{Parameter:} The number~$k$.\nopagebreak

  \emph{Question:} Is there a truth assignment $(x'_1,x'_2, \ldots ,x'_n)$
  satisfying at least $(1-2^{-r})m+k$ clauses of $\CCC$?
\end{quote}

Mahajan, Raman and Sikdar~\cite{MahajanRamanSikdar09} require $k$
  to be a positive integer, but since $(1-2^{-r})m$ is a positive rational
  number with denominator $2^r$, our setting for $k$ seems more
  natural.

We will prove that the problem has a quadratic kernel for a wide family of
instances.
We say that a pair of distinct
clauses $Y$ and $Z$ has a {\em conflict} if there is a literal $p\in Y$ such
that $-p\in Z$. We say that a pair $Y$ and $Z$ has an {\em overlap} if $Y$ and $Z$
have common literals, but they do not have a conflict. For an exact $r$-CNF formula $\CCC$, the {\em conflict number}
${\rm cn}({\CCC})=c-o$, where $c$ ($o$, respectively) is the number of ordered pairs of clauses having a conflict (an overlap, respectively).

\begin{theorem}\label{thSAT}
The problem \textsc{$r$-SATALB} restricted to exact $r$-CNF formulas with $m$
clauses and of conflict number at most $(2^{r}-2)m$ admits a quadratic kernel.
\end{theorem}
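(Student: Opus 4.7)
The plan is to apply the SABEM framework directly, without needing any reduction rule. I introduce a uniformly random truth assignment via independent $\varepsilon_i\in\{-1,1\}$ (representing $x_i$) and define $X$ to be the number of satisfied clauses minus the lower bound $(1-2^{-r})m$, so that $\mathbb{E}(X)=0$ and the answer to \textsc{$r$-SATALB} is \textsc{yes} iff $\Prob(X\ge k)>0$. Writing $\ell_{j,i}\in\{-1,1\}$ for the value of the $i$-th literal of clause $C_j$, the indicator that $C_j$ is falsified is $2^{-r}\prod_i(1-\ell_{j,i})$; hence the per-clause contribution is $X_j=2^{-r}\bigl(1-\prod_i(1-\ell_{j,i})\bigr)$, and $X=\sum_j X_j$ is a polynomial of degree $r$ in the $\pm 1$ variables~$\varepsilon_i$.

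The core calculation is $\mathbb{E}(X^2)=\sum_j \mathbb{E}(X_j^2)+\sum_{j\neq j'}\mathbb{E}(X_jX_{j'})$. Each diagonal term equals $2^{-r}(1-2^{-r})$. For an off-diagonal term, expanding $X_j X_{j'}$ reduces to evaluating $\mathbb{E}\bigl(\prod_i(1-\ell_{j,i})\prod_i(1-\ell_{j',i})\bigr)$. The key identities are $(1-\ell)(1+\ell)=0$ and $(1-\ell)^2=2-2\ell$: if $(C_j,C_{j'})$ has a conflict, then some shared variable contributes the factor $(1-\ell)(1+\ell)=0$ and the expectation vanishes; if $(C_j,C_{j'})$ has an overlap with $t\ge 1$ shared (same-sign) variables, each such variable contributes expectation $2$ while all other factors contribute $1$, giving $2^t$; if they share no variable, the expectation is $1$. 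Combining the four terms of the expansion, one obtains $\mathbb{E}(X_jX_{j'})=-2^{-2r}$ for a conflict pair, $2^{-2r}(2^t-1)\ge 2^{-2r}$ for an overlap pair, and $0$ for a disjoint pair. Therefore, writing $c$ and $o$ for the numbers of ordered conflict and overlap pairs,
\[
\mathbb{E}(X^2)\ge m\cdot 2^{-r}(1-2^{-r}) + 2^{-2r}(o-c) = 2^{-2r}\bigl(m(2^r-1)-\mathrm{cn}(\CCC)\bigr)\ge 2^{-2r}m,
\]
using the hypothesis $\mathrm{cn}(\CCC)\le(2^r-2)m$ in the last step.

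Since $X$ is a polynomial of degree $r$ in $\pm 1$ variables, Lemma~\ref{lem41} with $p=4$ and $c_4\le 2^{3/2}$ gives $\mathbb{E}(X^4)\le 2^{6r}\bigl(\mathbb{E}(X^2)\bigr)^2$, so Lemma~\ref{lem32} applies with $b=2^{6r}$ and yields $\Prob\bigl(X>\sqrt{\mathbb{E}(X^2)}/(4\cdot 2^{3r})\bigr)>0$. Combined with $\mathbb{E}(X^2)\ge m/4^r$ this becomes $\Prob\bigl(X>\sqrt{m}/(4\cdot 2^{4r})\bigr)>0$. Consequently, if $m\ge 16\cdot 2^{8r}k^2$ the instance is \textsc{yes}; otherwise $m=O(k^2)$, and since each clause has exactly $r$ literals the number of variables is also $O(k^2)$, producing the desired quadratic kernel. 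The main obstacle is the second-moment computation: correctly partitioning pairs of clauses into the three categories and verifying that the signs work out so that the conflict number $c-o$ exactly controls the loss, with the threshold $(2^r-2)m$ being the sharpest one that still leaves an $\Omega(m/4^r)$ lower bound on $\mathbb{E}(X^2)$.
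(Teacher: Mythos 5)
Your proposal is correct and follows essentially the same route as the paper: the same per-clause centered random variable (your $X_j=2^{-r}(1-\prod_i(1-\ell_{j,i}))$ is exactly the paper's $X_Z$), the same second-moment case analysis over disjoint/conflict/overlap pairs giving $\mathbb{E}(X_jX_{j'})\in\{0,-4^{-r},2^{-2r}(2^t-1)\}$ and hence $\mathbb{E}(X^2)\ge m4^{-r}$ from $\mathrm{cn}(\CCC)\le(2^r-2)m$, followed by Lemma~\ref{lem41} and Lemma~\ref{lem32} with the same threshold $m<16\cdot 2^{8r}k^2$. Your algebraic evaluation of the pair expectations via $(1-\ell)(1+\ell)=0$ and $(1-\ell)^2=2-2\ell$ is just a slicker computation of the same quantities the paper obtains by enumerating probabilities.
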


To establish the theorem, consider an exact $r$-CNF formula $\CCC$ with $m$
clauses and of conflict number at most $(2^{r}-2)m$, and
consider a clause $Z$ of $\CCC$. Let $x_{i_1}, x_{i_2}, \ldots , x_{i_r}$ be the
variables corresponding to the literals of $Z$ and let
$x^0_{i_1}, x^0_{i_2}, \ldots , x^0_{i_r}$ be the unique truth assignment not
satisfying $Z$. Define a random variable $X_Z$ as follows:
Let $V= \{-1,1\}^r-\{(x^0_{i_1}, x^0_{i_2}, \ldots , x^0_{i_r})\}$ and
\[
X_Z(x_1, x_2, \ldots ,x_n) =- \frac{2^r-1}{2^r}+\sum_{(v_1, \ldots ,v_r) \in V}
\frac{\prod_{j=1}^r (1+x_{i_j} v_j)}{2^r} .
\] The value of $X_Z$ at a truth assignment $x'=(x'_1,x'_2,
  \ldots ,x'_n)$ is $2^{-r}$ if $x'$ satisfies $Z$ (in this case the product equals $2^r$), and it is
$2^{-r}-1$, otherwise. Let $X =\sum_{Z \in \CCC} X_Z$ ($X$ is a special case of a random variable introduced in \cite{AlonGutinKrivelevich04}).

\medskip\noindent
We study some properties of $X$ in the following two lemmas.

\begin{lemma}\label{lemsat}
  Let $x'=(x'_1,x'_2, \ldots ,x'_n)$ be a truth assignment.  Then the value of
  $X$ at $x'$ equals $m(x')-(1-2^{-r})m$, where $m(x')$ is the number of clauses
  in $\CCC$ satisfied by $x'$. Thus, the answer to \textsc{$r$-SATALB} is {\sc
    yes} if and only if $X(x'')\ge k$ for some truth assignment $x''$.  We also
  have $\mathbb{E}(X) =0$ and
  $\mathbb{E}(X^4)\leq 2^{6r} \mathbb{E}(X^2)^2$.
\end{lemma}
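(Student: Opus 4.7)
The plan is to handle the three assertions in order, because the formulas for $X_Z$ and $X$ are engineered so that each claim reduces to a one-line observation plus a standard application of Lemma~\ref{lem41}.

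First I would evaluate $X_Z$ pointwise. For any truth assignment $x'\in\{-1,1\}^n$ and any clause $Z$ with variables $x_{i_1},\ldots,x_{i_r}$, the product $\prod_{j=1}^r (1+x'_{i_j} v_j)$ equals $2^r$ when $v_j=x'_{i_j}$ for every $j$, and vanishes otherwise (some factor is $0$). So the sum over $V$ in the definition of $X_Z$ contributes $1$ when $(x'_{i_1},\ldots,x'_{i_r})\in V$ — equivalently, when $x'$ satisfies $Z$ — and contributes $0$ when $(x'_{i_1},\ldots,x'_{i_r})=(x^0_{i_1},\ldots,x^0_{i_r})$. Hence $X_Z(x')=2^{-r}$ if $x'$ satisfies $Z$ and $X_Z(x')=2^{-r}-1$ otherwise. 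Summing over the $m$ clauses gives $X(x')=m(x')\cdot 2^{-r}+(m-m(x'))(2^{-r}-1)=m(x')-(1-2^{-r})m$, which yields the first assertion and the reformulation of \textsc{$r$-SATALB} via $X$.

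Next I would compute $\mathbb{E}(X)$ under the uniform distribution on $\{-1,1\}^n$. A fixed clause $Z$ is satisfied with probability $1-2^{-r}$, so by the pointwise formula for $X_Z$ just established,
\[
\mathbb{E}(X_Z)=(1-2^{-r})\cdot 2^{-r}+2^{-r}\cdot(2^{-r}-1)=0.
\]
Linearity of expectation then gives $\mathbb{E}(X)=0$.

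For the fourth-moment bound I would invoke Lemma~\ref{lem41}. The key observation is that, expanded out, $X_Z$ is a multilinear polynomial of degree exactly $r$ in the variables $x_{i_1},\ldots,x_{i_r}\in\{-1,1\}$ (the degree-$r$ term is $\prod_{j}(x_{i_j}v_j)/2^r$ summed over $V$). Hence $X=\sum_Z X_Z$ is a polynomial of degree at most $r$ in $x_1,\ldots,x_n$. Applying Lemma~\ref{lem41} with $p=4$ and $c_4\le 2^{3/2}$ gives $(\mathbb{E}(X^4))^{1/4}\le (2^{3/2})^r(\mathbb{E}(X^2))^{1/2}$, i.e.\ $\mathbb{E}(X^4)\le 2^{6r}\mathbb{E}(X^2)^2$, as required. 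The only point requiring a little care is the degree bookkeeping in showing that $X$ is genuinely a polynomial of degree $r$; once that is noted, Bourgain's inequality does all of the work, so there is no serious obstacle in the proof.
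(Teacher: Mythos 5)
Your proof is correct and follows essentially the same route as the paper: pointwise evaluation of $X_Z$ (which the paper records just before the lemma), summation and linearity of expectation for $X(x')$ and $\mathbb{E}(X)=0$, and Lemma~\ref{lem41} with $p=4$, $c_4\le 2^{3/2}$ for the fourth-moment bound. Your extra care in noting that $X$ has degree at most $r$ is exactly the observation the paper relies on, so there is nothing to add.
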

\begin{proof} Let $x'=(x'_1,x'_2,
  \ldots ,x'_n)$ be a truth assignment. Observe that
  $X(x')=m(x')2^{-r}+(2^{-r}-1)(m-m(x'))=m(x')-(1-2^{-r})m$. Hence,
  $m(x')\ge (1-2^{-r})m + k$ if and only if $X(x')\ge k$.

  Observe that the probability of $Z$ being satisfied (not satisfied) is
  $1-2^{-r}$ ($2^{-r}$). Thus, the expectation of $X_Z$ is zero, and
  $\mathbb{E}(X) =0$ by linearity of expectation.
  Since $X$ is a polynomial of degree at most $r$ in $x_1, x_2, \ldots
  ,x_n$, it follows, by Lemma \ref{lem41}, that $\mathbb{E}(X^4) \leq 2^{6r}
  \mathbb{E}(X^2)^2$.
\end{proof}


\begin{lemma}\label{lemsigma}
  We have $\mathbb{E}(X^2) \ge m4^{-r}$.
\end{lemma}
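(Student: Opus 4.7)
My plan is to expand $\mathbb{E}(X^2)$ into diagonal and off-diagonal contributions over clause pairs, and then to perform a three-way case analysis on pairs of distinct clauses: disjoint, conflicting, and overlapping.

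First I would use the computation already implicit in the proof of Lemma~\ref{lemsat}: the value of $X_Z$ at a truth assignment is $2^{-r}$ if $Z$ is satisfied and $2^{-r}-1$ otherwise, so one can write $X_Z = \mathbf{1}[Z\text{ satisfied}] - (1-2^{-r})$. In particular $X_Z$ is a centered Bernoulli-type variable, and an immediate computation gives $\mathbb{E}(X_Z^2) = 2^{-r}(1-2^{-r})$. By linearity, the diagonal contribution to $\mathbb{E}(X^2) = \sum_{Y,Z\in\CCC}\mathbb{E}(X_YX_Z)$ is $m\cdot 2^{-r}(1-2^{-r})$.

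Next I would evaluate the cross term $\mathbb{E}(X_YX_Z)$ for distinct clauses $Y,Z$. Using the centered-indicator form, one has
\[
\mathbb{E}(X_YX_Z) \;=\; \Prob[Y\text{ and }Z\text{ both satisfied}] - (1-2^{-r})^2.
\]
I would then split into three cases corresponding exactly to the paper's definitions. \emph{Disjoint variables:} $X_Y$ and $X_Z$ are independent and $\mathbb{E}(X_YX_Z)=0$. \emph{Conflict:} if some literal $p$ lies in $Y$ and $-p$ in $Z$, then $Y$ and $Z$ cannot both be unsatisfied, so $\Prob[Y\text{ fails AND }Z\text{ fails}]=0$, giving $\Prob[Y\text{ and }Z\text{ sat}] = 1 - 2\cdot 2^{-r}$ by inclusion--exclusion and hence $\mathbb{E}(X_YX_Z) = -2^{-2r}$. \emph{Overlap:} $Y$ and $Z$ share $s\ge 1$ variables, all with consistent signs; the joint failure event pins down $2r-s$ variables, so $\Prob[Y\text{ fails AND }Z\text{ fails}] = 2^{s-2r}$, and $\mathbb{E}(X_YX_Z) = 2^{-2r}(2^s-1) \ge 2^{-2r}$.

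Finally I would assemble these bounds. Let $c$ and $o$ be the numbers of ordered conflicting and overlapping pairs, so that ${\rm cn}(\CCC)=c-o$. The three cases give
\[
\mathbb{E}(X^2) \;\ge\; m\cdot 2^{-r}(1-2^{-r}) - c\cdot 2^{-2r} + o\cdot 2^{-2r} \;=\; m\cdot 2^{-r}(1-2^{-r}) - {\rm cn}(\CCC)\cdot 2^{-2r}.
\]
Substituting the hypothesis ${\rm cn}(\CCC)\le (2^r-2)m$ into this expression simplifies (the $2^{-r}$ terms cancel, leaving exactly $m\cdot 2^{-2r}$) and yields $\mathbb{E}(X^2)\ge m\cdot 4^{-r}$, as required. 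The only step requiring genuine care is the overlap case: it is tempting to settle for $\mathbb{E}(X_YX_Z)\ge 0$, but that is too weak -- one must retain the additional $+o\cdot 2^{-2r}$ to take full advantage of the conflict-number (not just conflict-count) hypothesis, which is exactly what the calculation $2^{-2r}(2^s-1)\ge 2^{-2r}$ for $s\ge 1$ supplies.
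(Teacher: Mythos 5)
Your proof is correct and follows essentially the same route as the paper: the same decomposition of $\mathbb{E}(X^2)$ into diagonal terms $m(2^{-r}-4^{-r})$ and cross terms, the same three-case analysis (disjoint, conflict, overlap) yielding $0$, $-4^{-r}$, and $2^{-2r}(2^s-1)\ge 4^{-r}$ respectively, and the same final use of ${\rm cn}(\CCC)\le (2^r-2)m$. The only cosmetic difference is that you evaluate the cross terms via the covariance identity $\Prob(\text{both satisfied})-(1-2^{-r})^2$ rather than enumerating the distribution of $X_YX_Z$ as the paper does.
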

\begin{proof}
  Observe that $\mathbb{E}(X^2)=\sum_{Z\in {\cal
      C}}\mathbb{E}(X_Z^2)+\sum_{Y\neq Z\in \CCC}\mathbb{E}(X_YX_Z)$.  We will
  compute $\mathbb{E}(X_Z^2)$ and $\mathbb{E}(X_YX_Z)$ separately.

  By the proof of Lemma \ref{lemsat}, $X_Z$ equals $2^{-r}$ with probability
  $1-2^{-r}$ and $2^{-r}-1$ with probability $2^{-r}$. Thus, $X^2_Z$ equals
  $2^{-2r}$ with probability $1-2^{-r}$ and $(2^{-r}-1)^2$ with probability
  $2^{-r}$. Hence, $\mathbb{E}(X_Z^2)=2^{-r}-4^{-r}$.

For a clause $Y$ of $\CCC$, let $\vars(Y)$ denote the sets of variables in $Y$
and let $\lits(Y)$ be the set of literals in $Y$.
To evaluate $\mathbb{E}(X_YX_Z)$, we consider the following three cases:

  \Case{Case 1:} $\vars(Y)\cap \vars(Z)=\emptyset$. Then $X_Y$ and $X_Z$ are
  independent random variables and, thus,
  $\mathbb{E}(X_YX_Z)=\mathbb{E}(X_Y)\mathbb{E}(X_Z)=0$.

  \Case{Case 2:} $Y$ and $Z$ have a conflict. Then $X_YX_Z$ equals $2^{-r}(2^{-r}-1)$ with probability
  $2^{-r+1}$, and $X_YX_Z$ equals $2^{-2r}$ with probability $1-2^{-r+1}$.
  Hence, $\mathbb{E}(X_YX_Z)=-4^{-r}$.

  \Case{Case 3:} $|\vars(Y)\cap \vars(Z)|=t>0$ and $|\lits(Y)\cap
  \lits(Z)|=t$. Since $Y\neq Z$, we have $1\le t<r$. Without loss of generality,
  assume that $\lits(Y)=\{x_1,\ldots,x_t,$\hskip0pt $x_{t+1},\ldots x_r\}$ and
  $\lits(Z)=\{x_1,\ldots ,x_t,x_{r+1},\ldots x_{2r-t}\}$. Thus, $X_YX_Z$ equals
  $(2^{-r}-1)^2$ with probability $2^{t-2r}$, $2^{-r}(2^{-r}-1)$ with
  probability $(2^{r-t+1}-2)/2^{2r-t}$, and $2^{-2r}$ with probability
  $(1-2^{r-t+1}+2^{2r-t})/2^{2r-t}$. Hence,

  $$\mathbb{E}(X_YX_Z)=2^{t-2r}(1-2^{-t})\ge 4^{-r}.$$

 Since ${\rm  cn}(\CCC)\le (2^r-2)m$, we have
  \[
  \label{e1}\mathbb{E}(X^2)\ge
  \sum_{Z\in  \CCC}\mathbb{E}(X_Z^2)+
  \!\!\!\sum_{Y\neq Z\in    \CCC}\!\!\mathbb{E}(X_YX_Z)\ge
  (2^{-r}-4^{-r})m -
  {\rm  cn}(\CCC)\cdot 4^{-r}\ge m4^{-r}.
  \]
  \end{proof}

\medskip\noindent Now we can complete the proof of Theorem \ref{thSAT}. By
Lemmas~\ref{lem32}, \ref{lemsat} and \ref{lemsigma}, $\Prob(\ X>
\sqrt{m}/(2^r\cdot 4\cdot 8^r)\ )>0$. Thus, if $\sqrt{m}/(2^r\cdot 4\cdot
8^r)\ge k$, there is a truth assignment $x'$ such that $X(x')\ge k$, i.e., the
answer to the instance of \textsc{$r$-SATALB} is {\sc yes}. Otherwise,
$m<16\cdot 256^rk^2$.  Thus Theorem \ref{thSAT} is established.

\medskip\noindent Consider a {\sc yes}-instance $(\CCC,k)$ of $r$-SATALB with
$n$ variables, $m\geq 16\cdot 256^rk^2$ clauses, and ${\rm  cn}(\CCC)\le (2^{r}-2)m$.
As in the previous section, we can find in polynomial time a
solution for $(\CCC,k)$ using the facts that each random variable $x_i$ is
$4r$-wise independent and there is an $O(n^{2r})$-size sample space to support
each~$x_i$.

\section{Discussions}\label{secd}

We have showed that the new method allows us to prove that some
maximization problems parameterized above tight lower bounds are
fixed-parameter tractable. Our method can also be used
for minimization problems parameterized below tight upper bounds.
As a simple example, consider the feedback arc problem: given a digraph $D=(V,A)$ find a minimum
set $F$ of arcs such that $D-F$ is acyclic. Certainly, $|A|/2$ is a tight upper bound on a minimum feedback set
and we can consider the parameterized problem which
asks whether $D$ has a feedback arc set with at most $|A|/2-k$ arcs.
Fixed-parameter tractability of this parameterized problem follows immediately from fixed-parameter
tractability of LOALB, but we could prove this result directly using essentially the same approach as for LOALB.

It would be interesting to obtain applications of our method to other problems parameterized
above tight lower bounds or below tight upper bounds. One such very recent application is given in \cite{GutinKimMnichYeo},
where an open problem due to Benny Chor and described in \cite{Niedermeier06} was solved. The random variable $X$ considered there
is not symmetric and both application of Lemma \ref{lem41} and computation of  $\mathbb{E}(X^2)$ are more involved than for the problems
considered in this paper.

Let us provide further comments on the problems considered in this paper. First, it is natural to parameterize \textsc{Max Lin-2}
not just by $k$ but also by $r=r(S)$. The proof of Case 2 of Theorem \ref{thLin} shows immediately that this two-parameter problem is
fixed-parameter tractable (but our problem kernel is no longer of polynomial size). This result can be viewed as a contribution
towards Multivariate Algorithmics as outlined by Fellows \cite{FelIWOCA}. Second, we have managed to obtain a polynomial problem kernel
for \textsc{$2$-SATALB} in \cite{GutinKimSzeiderYeo09}. The approach there is very different from the method we introduced here and it
involves signed weighted graphs, graph matching theory
and the first moment probabilistic method. Perhaps, the approach of \cite{GutinKimSzeiderYeo09} can be extended to \textsc{$3$-SATALB}, but we
doubt that it can be extended to \textsc{$r$-SATALB} for $r>3.$

\vspace{2mm}

\noindent{\bf Acknowledgments.} Research of Gutin, Kim and Yeo was
supported in part by an EPSRC grant.


\end{document}